\documentclass[pra,amsmath,amssymb, notitlepage, twocolumn,superscriptaddress,longbibliography,nofootinbib,floatfix]{revtex4-1}
\usepackage[pdftex,colorlinks=true,linkcolor=darkblue,citecolor=blue,urlcolor=darkred]{hyperref}
\usepackage{braket}
\usepackage{amsmath,amsfonts, amssymb, amsthm, dsfont}
\usepackage{yfonts}
\usepackage{bm,breqn}
\usepackage{mathrsfs}
\usepackage{array,makecell}
\usepackage{svg}
\usepackage{graphicx}
\usepackage[caption=false]{subfig}
\usepackage{verbatim,soul}

\usepackage{tikz}
\usetikzlibrary{calc}
\usepackage{multirow}
\usepackage{color}
\usepackage[capitalise]{cleveref}

\theoremstyle{definition}
\newtheorem{proposition}{Proposition}
\newtheorem{lemma}{Lemma}
\newtheorem*{lemma*}{Lemma}

\definecolor{darkblue}{rgb}{0.,0.,0.4}
\definecolor{darkred}{rgb}{0.5,0.,0.}

\graphicspath{{figs/}{}}

\newcommand{\refeq}[1]{Eq.~(\ref{#1})}
\newcommand{\reffig}[1]{Fig.~\ref{#1}}
\newcommand{\refsec}[1]{Sec.~\ref{#1}}
\newcommand{\reftab}[1]{Tab.~\ref{#1}}
\newcommand{\refcite}[1]{Ref.~\cite{#1}}

\newcommand{\Tr}{\text{Tr}}

\begin{document}
\title{The Penrose Tiling is a Quantum Error-Correcting Code}
\author{Zhi Li}\affiliation{\PI}
\author{Latham Boyle}\affiliation{\PI}\affiliation{\Edinburgh}
\newcommand*{\PI}{Perimeter Institute for Theoretical Physics, Waterloo, Ontario N2L 2Y5, Canada}
\newcommand*{\Edinburgh}{Higgs Centre for Theoretical Physics, James Clerk Maxwell Building, Edinburgh EH9 3FD, UK} 

\begin{abstract}
The Penrose tiling (PT) is an intrinsically non-periodic way of tiling the plane, with many remarkable properties.  A quantum error-correcting code (QECC) is a clever way of protecting quantum information from noise, by encoding the information with a sophisticated type of redundancy.  
Although PTs and QECCs might seem completely unrelated, in this paper we point out that PTs give rise to (or, in a sense, {\it are}) a remarkable new type of QECC.  In this code, quantum information is encoded through quantum geometry, and any local errors or erasures in any finite region, no matter how large, may be diagnosed and corrected.
We also construct variants of this code (based on the Ammann-Beenker and Fibonacci tilings) that can live on finite spatial tori, in discrete spin systems, or in an arbitrary number of spatial dimensions.  We discuss connections to quantum computing, condensed matter physics, and quantum gravity.
\end{abstract}
\maketitle

\section{Introduction}

Penrose tilings (PTs) \cite{penrose1974role} are a class of tessellations of the two-dimensional (2D) plane, whose beautiful and unexpected properties have fascinated physicists, mathematicians, and geometry lovers of all sorts, ever since their discovery in the 1970s \cite{gardner1977extraordinary, baake2013aperiodic}.  These tilings are intrinsically non-periodic, yet perfectly long-range ordered; and among many other remarkable characteristics, they exhibit a kind of self-similarity and a ten-fold symmetry forbidden in any periodic pattern (see \reffig{fig-QCQECC}(a)).  In the 1980s, they also turned out to be the blueprints for a new class of materials (quasicrystals) discovered in the lab \cite{PhysRevLett.53.1951}, and later found to also occur naturally \cite{bindi2009natural} (forming {\it e.g.}\ in the birth of the Solar System \cite{hollister2014impact}, in lightning strikes \cite{bindi2023electrical}, and in the first atomic bomb test \cite{bindi2021accidental}).

A quantum error-correcting code (QECC) \cite{shor1995scheme} is a way of encoding quantum information with a sophisticated type of redundancy, so that certain errors can be detected and corrected, and the original quantum information can be reconstructed from the disrupted states. 
Such codes play a deep and increasingly wide-ranging role in physics: in quantum computing, where they protect the delicate quantum state of the quantum computer \cite{kitaev2002classical,nielsen2002quantum}; 
in condensed matter physics, where they underpin the notion of topologically-ordered phases (whose ground states form the code space of such a code) \cite{kitaev2003fault,bravyi2010topological,zeng2019quantum}; and even in quantum gravity, where the holographic or gauge/gravity duality \cite{maldacena1999large} may be understood as such a code \cite{Almheiri:2014lwa, Pastawski:2015qua}.

\begin{figure}
  \begin{center}
  \subfloat[]{\includegraphics[width=1.7in]{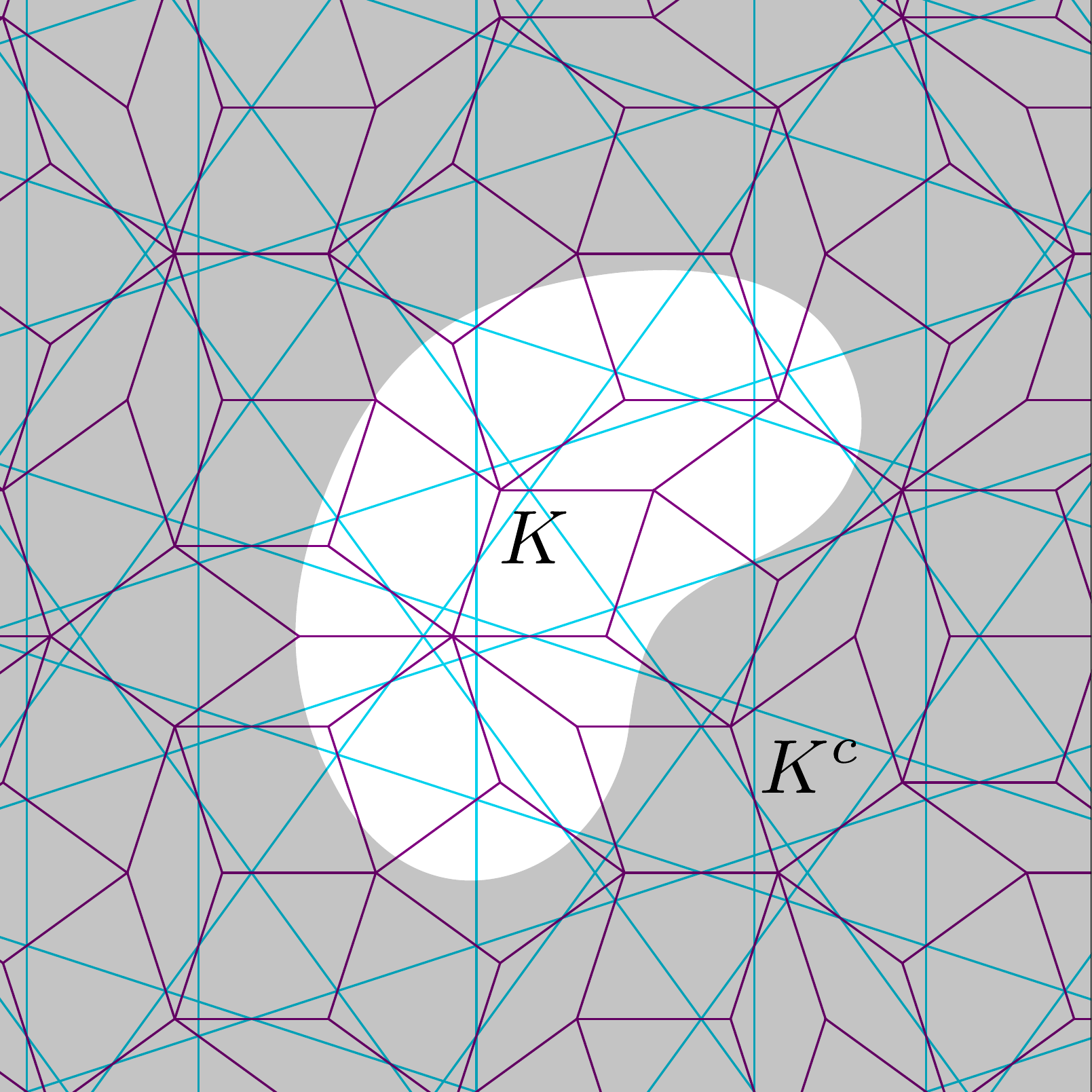}}
    \subfloat[]{\includegraphics[width=1.7in]{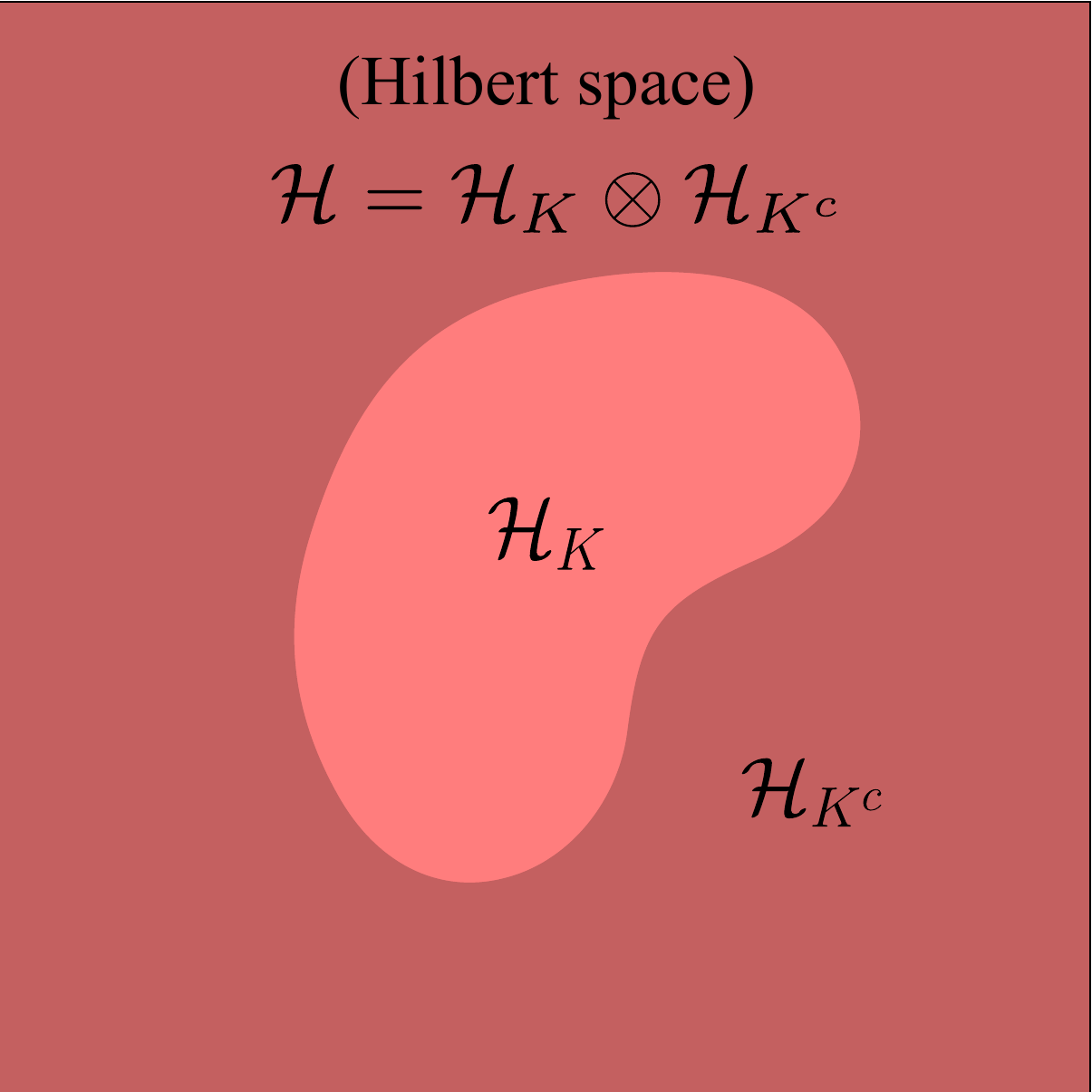}}\\[-1ex]
  \end{center}
  \caption{Parallel between PTs and QECCs.  (a) In a PT: examining a finite region $K$ tells you nothing about which PT you are in, but examining the complementary region $K^c$ allows you to reconstruct full PT. Here, purple lines are the edges of tiles and blue lines are Ammann lines. (b) In a QECC: examining a logical state (code state) in a finite region $K$ tells you nothing about which logical state you're in, but examining the complementary region $K^c$ allows you to reconstruct the logical state on the full space.}
  \label{fig-QCQECC}
\end{figure}

Although PTs and QECCs might seem completely unrelated, we will see there is a deep connection.

On the QECC side, one of the most fundamental insights underlying quantum error correction is the following equivalence between \textit{recoverability} (or \textit{correctability}) and \textit{indistinguishability} \cite{kitaev2002classical, nielsen2002quantum}: arbitrary errors in and erasures of a certain spatial region $K$ are correctable if and only if the region contains no logical information; more precisely, if and only if the various states in the code space are indistinguishable in $K$ (in the sense that their reduced density matrices in $K$ are identical). Colloquially speaking, the quantum information is encoded in a \emph{global} way, rather than in any small region.

On the PT side, there are also analogous concepts of recoverability and indistinguishability that play a key role.  To understand indistinguishability, note that there are actually an infinite number of distinct PTs: they are {\it globally} inequivalent (in the sense that no translations and rotations of one PT can bring it into perfect global agreement with another distinct PT), but they are all {\it locally} indistinguishable, in the sense that any finite patch of any PT, no matter how large, must also appear in any other distinct PT (so no matter how large a finite region one explores, one cannot determine {\it which} PT one is exploring).  As for recoverability: if we erase any finite region $K$ of a PT, no matter how large, the missing region can be uniquely recovered from knowledge of the rest of the tiling (in the complementary region $K^c$).

The notions of indistinguishability and recoverability in PTs indeed smell similar to those in QECCs.  There are, however, crucial differences. The former indistinguishability is a \textit{classical} property concerning geometric configurations and relates \textit{different} spatial regions; while the latter is a property valid for encoded \textit{quantum} states, including their quantum superpositions, and relates to different quantum states in the \textit{same} spatial region.

Nevertheless, in this paper, we show that, because of these indistinguishability and recoverability properties, PTs give rise to (or, in a sense, {\it are}) a remarkable new type of QECC in which the quantum information is encoded through quantum geometry, and any local errors or erasures in any finite region, no matter how large, may be diagnosed and corrected.  
We then show how to use relatives of the Penrose tiling (called the Ammann-Beenker tiling \cite{ammann1992aperiodic, beenker1982algebraic, baake2013aperiodic} and the Fibonacci tiling \cite{grunbaum1987tilings, senechal1996quasicrystals}) to construct variants of this PT QECC that can live on finite spatial tori, in discrete spin systems, or in arbitrary spatial dimension.

\section{Penrose Tilings} \label{sec:qc}
We begin with a brief introduction to Penrose tilings. The properties to be discussed here are quite general and also apply to the Fibonacci and Ammann-Beenker tilings that we will discuss later.

\subsection{Defining the Tilings}

\begin{figure}
\centering
\subfloat[]{\includegraphics[width=0.75\columnwidth]{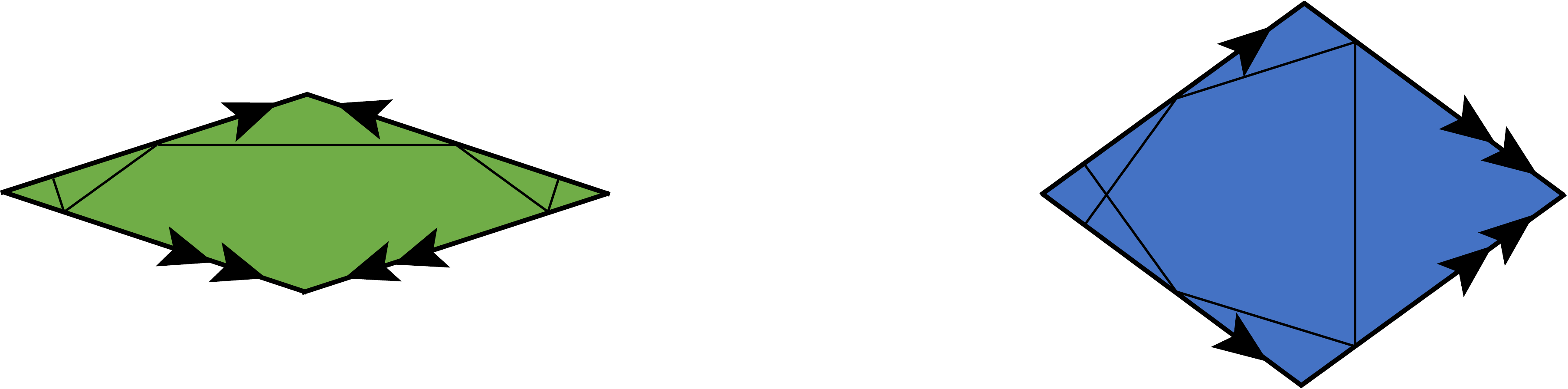}}\\[-1ex]
\subfloat[]{\includegraphics[width=0.9\columnwidth]{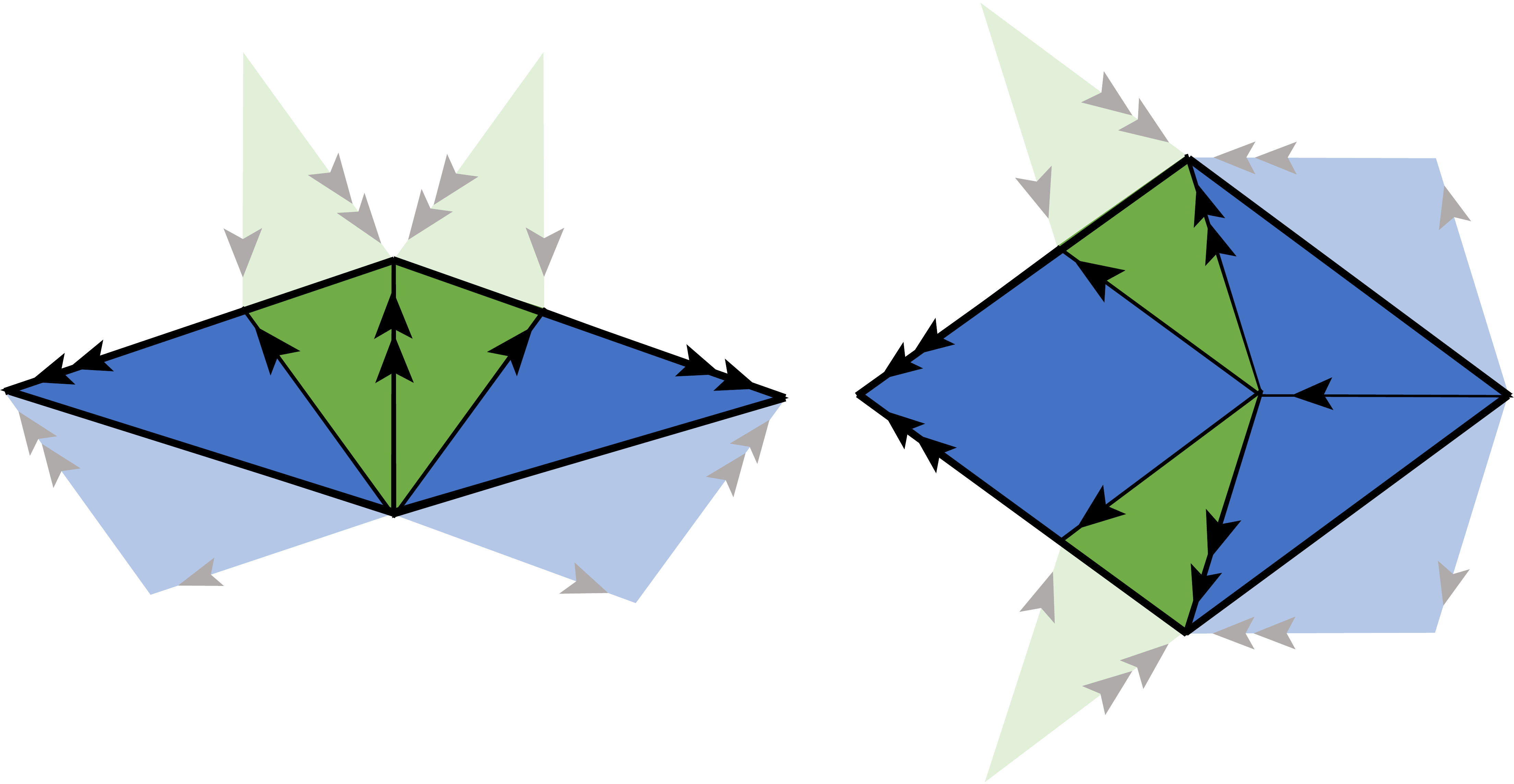}}\\[-1ex]
\caption{(a) Two fundamental rhombi for Penrose tilings, thin (green) and thick (blue), decorated by arrows or Ammann bars to define the matching rule. 
(b) The inflation rule for Penrose tilings. Note that inflation of a parent rhombus produces several half offspring rhombi, which are to be combined with other half offspring rhombi (indicated by lighter color), coming from inflation of neighboring parent rhombi.
}
  \label{fig-PT}
\end{figure}

The Penrose tiling, shown by the purple lines in \reffig{fig-QCQECC}(a), is a tessellation of $\mathbb R^2$ using two types of rhombi: thick (with angle $2\pi/5$) and thin (with angle $\pi/5$).  The edges of the rhombi are decorated by arrows as in \reffig{fig-PT}(a), and the arrows are required to match on all edges in the tiling.  This \emph{matching rule} \cite{levitov1988local,socolar1990locality} forbids {\it e.g.}\ the trivial periodic tiling constructed from just one of the two rhombi.  It is a striking discovery of \refcite{penrose1974role} that such a tessellation is possible, but only if it is aperiodic.

Instead of arrows on the edges, we can equivalently specify the matching rules by decorating the two tiles with a certain pattern of line segments (called ``Ammann bars"), shown in \reffig{fig-PT}(a), and demanding that these Ammann bars form unbroken straight lines as they cross from one tile to the next: see the thin blue lines in \reffig{fig-QCQECC}(a).  The resulting blue pattern of Ammann lines is ``dual" to the purple Penrose tiling: they uniquely determine each other (see appendix \ref{app-PTA}).

\subsection{Inflation and Deflation}\label{sec-infdef}

There is another useful description of Penrose tilings. We define an \emph{inflation} rule, where we cut each green or blue rhombus into a certain fixed pattern of smaller green and blue rhombi, as shown in Fig.\ref{fig-PT}(b).  
By cutting up all the tiles of any legal PT (the ``parent" PT) in this way and rescaling it by the golden ratio $\frac{1}{2}(\sqrt{5}+1)$, we obtain another legal PT (the ``offspring" PT). This inflation process may be iterated indefinitely many times.

We can also consider the inverse process, \emph{deflation}, in which we glue together the tiles of an offspring PT into ``supertiles," again according to the pattern shown in Fig.\ref{fig-PT}(b), to recover the parent PT from which it descended.  In a legal PT, there is a \emph{unique} way to group all the tiles into supertiles in this way, so one can unambiguously determine the parent PT from its offspring, and this deflation process may be iterated indefinitely.

\subsection{Indistinguishability and Recoverability}\label{sec-qc-indis}

Importantly, the (arrow or Ammann line) matching rule does \emph{not} fully determine the PT (even up to translation and rotation): there are actually uncountably many globally inequivalent PTs!

However, these inequivalent PTs are closely related to each other. 
Mathematically, they belong to the same \textit{local indistinguishability class}: any finite patch present in one Penrose tiling $T$ must also appear in any other Penrose tiling $T'$ \cite{baake2013aperiodic}\footnote{Note that the term ``locally indistinguishable" used here is synonymous with the earlier term ``locally isomorphic" introduced in \cite{levine1986quasicrystals}.  Both terms are now in common use.}.  Therefore, in the absence of an absolute Euclidean reference frame (specifying the position of the origin, and the orientation of the $\hat{x}$ and $\hat{y}$ directions), inequivalent PTs only differ in their ``global" behavior; one cannot distinguish them by inspecting a finite region, no matter how large it is.  In this paper, we will make use of a stronger, quantitative version of local indistinguishability: not only do all finite patches in $T$ appear in $T'$ and vice versa, but the relative frequencies of different finite patches are also the same. In fact, the relative frequencies can be calculated solely from the inflation rule, see appendix \ref{app-indis} for details.

Another important feature is \emph{local recoverability}: the pattern in any finite region $K$ can be uniquely recovered from the pattern in the complementary region $K^c$.  This is because one can extend the Ammann lines from $K^c$ into $K$, thus recovering the Ammann lines on the whole plane, and hence recovering the whole PT.

Note that these are properties of the geometric configurations, and should not be confused with the local indistinguishability and recovery in QECCs, which are properties of the quantum states. 
Nevertheless, as the key point of this paper, we will leverage these classical properties to construct quantum states that are indeed quantum error-correcting.

\section{The Penrose Tiling as a QECC}\label{sec:construction}

In this section, we show how the PT yields a QECC.  We begin with a brief, informal introduction to QECCs.  (For more, see appendix \ref{app-qecc}.)

\subsection{QECCs: A  Brief Introduction}

Suppose the quantum information we want to protect are quantum states in the Hilbert space ${\cal H}_{0}$.  A QECC works by storing this state with a carefully chosen type of redundancy so that certain errors can be identified and corrected.  More precisely, the Hilbert space ${\cal H}_{0}$ of ``bare" or ``logical" quantum states is ``encoded" by embedding it in an enlarged Hilbert space ${\cal H}$ as a carefully chosen subspace $\mathcal C$, called the code space.

In this paper, the errors that can be corrected will be the erasure of any arbitrary finite spatial region $K$. As a result, arbitrary errors in $K$ will also be correctable.  It is a fundamental fact \cite{kitaev2002classical, nielsen2002quantum} that the erasure of region $K$ is correctable if and only if $K$ contains no logical information, {\it i.e.}, if and only if the various states in the code space ${\cal C}$ are \emph{indistinguishable} in $K$: see \refeq{eq-QEC-def-main} below.
Moreover, note that the spatial region $K$ is not fixed: we could decompose the whole space into the union of many such $K$'s, each satisfying the QECC condition and containing no information. Hence, in a QECC, the quantum information is stored in a ``global" way.

Thinking of ${\cal H}$ as ${\cal H}_{K}\otimes{\cal H}_{K^c}$ (where ${\cal H}_{K}$ and ${\cal H}_{K^c}$ are the Hilbert spaces for $K$ and the complementary region $K^{c}$), indistinguishability says that:
\begin{equation}\label{eq-QEC-def-main}    \Tr_{K^c}\ket{\xi}\bra{\xi}=\Tr_{K^c}\ket{\xi'}\bra{\xi'}
\end{equation}
$\forall\ket\xi,\ket{\xi'}\in\mathcal C$ that are normalized (where $\Tr_{K^c}$ means tracing over ${\cal H}_{K^c}$).
If the space $\mathcal{C}$ is spanned by states $\ket{\psi_i}$, then \refeq{eq-QEC-def-main} is equivalent to 
\begin{equation}\label{eq-QECcriteria}
\Tr_{K^c}\ket{\psi_i}\bra{\psi_j}=\braket{\psi_j|\psi_i}\rho_K,~~\forall i,j.
\end{equation}
Here, it is crucial that $\rho_K$ (an operator in ${\cal H}_{K}$) is independent of $i$ and $j$.  In the above criteria, $\{\ket{\psi_i}\}$ could be unnormalized, non-orthogonal or over-complete. For a proof, see appendix \ref{app-qecc}.
We emphasize that the states being protected are not limited to the ``basis" states $\ket{\psi_i}$, but can also be arbitrary quantum superpositions of such states, namely, any states in $\mathcal C$.

With this background, we are now ready to construct the PT QECC, capable of correcting arbitrary erasures and errors in any finite spatial region $K$.

\subsection{Constructing the PT QECC}

\begin{figure*}
    \centering
    \includegraphics[width=0.8\linewidth]{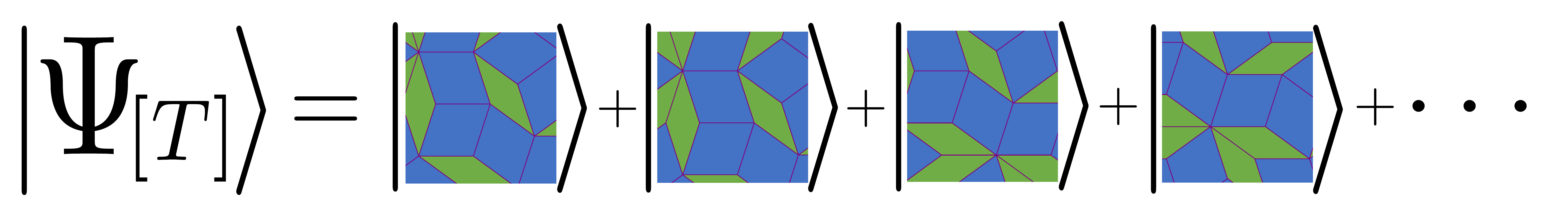}
    \caption{An illustration for the wavefunction \refeq{eq:codegs}. 
    Given an (infinite) Penrose tiling $T$, each term on the right-hand side represents a Euclidean transformed version of the original tiling, denoted by $gT$. 
    Here, four patches are drawn from the same tiling $T$, serving to illustrate the relative translations and rotations among the $gT$'s.
    }
    \label{fig-wavefunction}
\end{figure*}

We consider the set of Penrose tilings $\mathcal{T}$.  If $T$ denotes a particular PT in $\mathcal{T}$, then $gT$ denotes the PT obtained from $T$ by the 2D Euclidean transformation (translation and/or rotation) $g$, and $[T]=\{gT\}$ denotes the equivalence class of all PTs that are equivalent to $T$ up to 2D Euclidean transformations.

We can regard a tiling $T$ as a state $\ket{T}$ in a quantum mechanical Hilbert space ${\cal H}$.  Dividing the plane $\mathbb{R}^{2}$ into any spatial region $K$ and the complementary region $K^c$ divides the tiling $T$ into the corresponding portions $T_{K}$ and $T_{K^c}$ which lie in $K$ and $K^c$, respectively.  This induces a decomposition of the Hilbert space ${\cal H}={\cal H}_{K}\otimes{\cal H}_{K^{c}}$, and a corresponding decomposition of the state $\ket{T}=\ket{T}_{K}\ket{T}_{K^c}$, where $\ket{T}_{K}\in{\cal H}_{K}$ and $\ket{T}_{K^c}\in{\cal H}_{K^c}$ only depend on $T_{K}$ and $T_{K^c}$, respectively.  If two tilings $T$ and $T'$ are distinct (and here we mean distinct in the {\it presence} of an absolute reference frame, so that even two tilings that merely differ by an overall Euclidean transformation are distinct in general) they are represented by orthogonal states in ${\cal H}$: 
\begin{equation}
  \label{delta_T_Tp}
    \braket{T'|T}=\delta(T',T),
\end{equation}
and similarly, if $T$ and $T'$ are distinct in $K$ ({\it i.e.}\ if $T_{K}\neq T_{K}'$), then $\ket{T}_{K}$ and $\ket{T'}_{K}$ are orthogonal in ${\cal H}_{K}$.

For each equivalence class $[T]$ define the wavefunction
\begin{equation}\label{eq:codegs}
    \ket{\Psi_{[T]}}=\int dg \ket{gT},
\end{equation}
where we superpose over all Euclidean transformations $g$, so $\ket{\Psi_{[T]}}$ only depends on $[T]$, see \reffig{fig-wavefunction} for illustration.  The main claim of this paper is that the states $\ket{\Psi_{[T]}}$ form an orthogonal basis for the code space ${\cal C}\subset{\cal H}$ of a QECC that corrects arbitrary errors or erasures in any finite region $K$.

To understand the claim, let us check the criterion \refeq{eq-QECcriteria} for a QECC.  In our case, Eqs.~(\ref{delta_T_Tp}, \ref{eq:codegs}) imply that $\braket{\Psi_{[T]}|\Psi_{[T']}}=0$ when $[T]\neq[T']$, so we need to check:
\begin{equation}\label{eq:QECcheck}
    \Tr_{K^c}\ket{\Psi_{[T]}}\bra{\Psi_{[T']}}=
    \begin{cases}
      0, & \text{~if~}[T]\neq [T']\\
    \braket{\Psi_{[T]}|\Psi_{[T]}}\rho_{K}, & \text{~if~} [T]=[T']\\
    \end{cases}.
\end{equation}

It turns out the first condition in \refeq{eq:QECcheck} follows from the geometric PT recoverability property discussed in \refsec{sec-qc-indis}. Indeed, with \refeq{eq:codegs} in mind, the vanishing of $\Tr_{K^c}\ket{\Psi_{[T]}}\bra{\Psi_{[T']}}$ says two classical configurations $T$ and $T'$ belonging to different classes $[T]\neq [T']$ must also differ on $K^c$. In other words, the configuration in $K^c$ should uniquely determine the equivalence class it belongs to. This is guaranteed by PT recoverability. 

To check the second condition in \refeq{eq:QECcheck}, we perform the explicit calculation:
\begin{equation}\label{eq:TT}
 \begin{aligned}
     \Tr_{K^c}\ket{\Psi_{[T]}}\bra{\Psi_{[T]}}&=\!\iint\!\!dg' dg~\delta(gT,g'T) \ket{gT}_K\!\bra{g'T}_K\\
     &=\!\int\!\! dg''\;\delta(T,g''T)\cdot\!\int\!\! dg \ket{gT}_K
     \!\bra{gT}_K.
\end{aligned}   
\end{equation}
Here, the first equality is again due to PT recoverability: 
in order for $\Tr_{K^c}\ket{gT}\bra{g'T}$ to be nonzero, $gT$ and $g'T$ must match in $K^c$, and hence globally, giving us the delta function. The second equality follows since $\delta(gT,g'T)$ only depends on the difference $g''=g^{-1}g'$.  Note that the factor $\int dg \delta(gT,T)$ is proportional to $\braket{\Psi_{[T]}|\Psi_{[T]}}$ by the same argument, so it is enough to show that $\rho_{K, [T]}=\int dg \ket{gT}_K\bra{gT}_K$ is actually $T$-independent. 

This is precisely the PT local indistinguishability property. Indeed, $\rho_{K, [T]}$ is simply the classical mixture of all local patterns $T_{K}$ that could appear in $T$ by rotating and translating $K$, weighted by their relative frequencies.  Local indistinguishability, that the local patterns $T_{K}$ (including their relative frequencies) are the same for all tilings $T$, implies that $\rho_{K, [T]}$ is $T$-independent.

In summary, by utilizing the properties of Penrose tilings, we have constructed a QECC that is capable of correcting arbitrary errors and erasures in any finite region. 
In this code, each basis state $\ket{\Psi_{[T]}}$ for the codespace ${\cal C}$ is a quantum superposition over all tilings $|T\rangle$ in the equivalence class $[T]$; and a typical codestate is, moreover, a quantum superposition of such basis states ({\it i.e.}\ of distinct equivalence classes $[T], [T'],\ldots$).  Thus the quantum information is encoded in the quantum geometry of how the Penrose tilings are superposed.

\section{Discrete Realization via the Fibonacci Quasicrystals}\label{sec:discrete}
The above construction uses a continuum degree of freedom, but we will now see this is not essential: In this section, we show how to construct similar QECCs on discrete systems (spin chains), using 1D ``Fibonacci Quasicrystals", a class of 1D analogs of 2D PTs.

\subsection{1D Fibonacci Quasicrystals}

\begin{figure}
\centering
\subfloat[]{\includegraphics[width=\columnwidth]{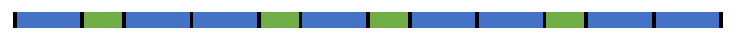}}\\
\subfloat[]{\includegraphics[width=0.7\columnwidth]{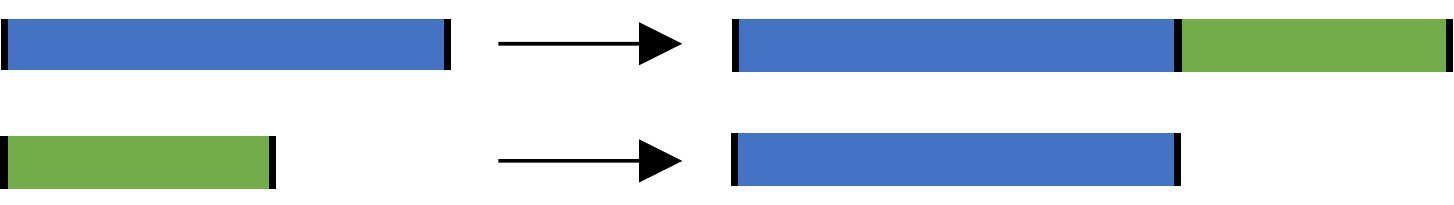}}\\[-1ex]
  \caption{(a) A finite piece of a 1D Fibonacci quasicrystal (scaled to fit), also represented as $LSLLSLSLLSLL$ or 101101011011.
  (b) Inflation rule $(L,S)\to(LS,L)$.
  }
  \label{fig-inflation1D}
\end{figure}

Consider a tessellation of $\mathbb R$ composed of two types of intervals, $L$ (long) and $S$ (short), with relative length equal to the golden ratio: $L=\frac{\sqrt{5}+1}{2}S$. 
For such tilings, we define an inflation rule as:
\begin{equation}
    (L,S)\to(LS,L),
\end{equation}
see \reffig{fig-inflation1D}(b).
Similar to the discussion in \refsec{sec-infdef}, we define a 1D Fibonacci quasicrystal as a tiling using $L$s and $S$s, such that one can perform the corresponding deflation process unambiguously and indefinitely.  The requirements of unambiguous and indefinite deflation strongly restrict the pattern. 
Consequently, there must be no pair of adjacent short tiles $SS$ or triple of adjacent long tiles $LLL$, etc\footnote{The requirement of no adjacent double $S$ resembles the Rydberg blockade \cite{PhysRevLett.85.2208} in the study of quantum spins chains (see also \cite{PhysRevB.106.094202}). However, here we have infinitely many forbidden patterns from different levels of deflations.}, and the ratio of numbers of $L$s and $S$s equals the golden ratio $\frac{\sqrt{5}+1}{2}$ asymptotically
(from which it follows that the tiling must be aperiodic). 

The Penrose tilings and the Fibonacci quasicrystals are closely related: 
the Ammann lines on a PT divide into 5 parallel subsets (each of which is parallel to one of the sides of a regular pentagon),
and if we focus on one of these parallel subsets, we find that the spacings among those lines precisely form a 1D Fibonacci quasicrystal.

\subsection{Discrete QECC from Symbolic Substitutions}

The structure of the Fibonacci quasicrystal also shows up in certain discrete systems. More precisely, we can equivalently represent Fibonacci quasicrystals as two-sided infinite bit strings by replacing each long tile $L$ with the digit 1 and each short tile $S$ with a digit 0. The inflation rule is now:
\begin{equation}
    (1,0)\to(10,1).
\end{equation}
The inflation rules in discrete systems are usually called \emph{symbolic substitutions}.

At the level of bit strings, the crucial properties enabling our QECC construction still hold.
More precisely, there are still infinitely many globally inequivalent ({\it i.e.}\ not related by translation) bit strings $\{F\}$ possessing local indistinguishability and recoverability: 
\begin{itemize}
    \item any finite substring of $F$ also appears in any other string $F'$, and the relative frequencies of different finite substrings also match (see Appendix \ref{app-indis});
    \item any finite substring of $F$ (in the finite region $K$) can be recovered from the remainder of $F$ (in the complementary region $K^c$), except when $F$ is singular\footnote{Among infinitely many strings $F$, only two are singular: the two strings which are entirely reflection symmetric about the origin, except for the central two digits, which are either $LS$ or $SL$.} (see Appendix \ref{sec-recover-1D-infinite}). 
\end{itemize}

Therefore, we can utilize these bit strings to construct discrete many-body wavefunctions analogous to \refeq{eq:codegs}, which form the basis for a similar QECC.  Following the spirit of \refeq{eq:codegs}, we construct the following wavefunction for each equivalence class of bit strings $[F]$:
\begin{equation}\label{eq:1Dwavefunction}
    \ket{\Psi_{[F]}}\propto\sum_{x=-\infty}^{\infty} \ket{x+F},
\end{equation}
where $x+F$ means translation of $F$ by $x$. It can be thought of as a wavefunction for a quantum spin chain living on the integer lattice $\mathbb{Z}$.  Following the arguments in \refsec{sec:construction}, these quantum states $\{\ket{\Psi_{[F]}}\}$ span the code subspace ${\cal C}$ of a QECC correcting arbitrary errors in any arbitrary finite interval.

We compute the entanglement entropy of these code states in Appendix \ref{app-entropy}.

\section{Finite Realization via Ammann–Beenker tilings}\label{sec:AB}

Given that our constructions \refeq{eq:codegs} and \refeq{eq:1Dwavefunction} live on strictly infinite spaces, it is tempting to ask whether such quasicrystal-inspired wavefunctions can be realized on finite systems, e.g., a torus or a ring.  However, there is a fundamental obstruction for PTs: the arrows in \reffig{fig-PT} define an \emph{aperiodic matching rule}: 
any tiling that locally resembles the Penrose tiling, i.e., obeys the matching rule, must be aperiodic, and hence can never be realized on a torus.
Indeed, any PT-like configuration on a torus must contain at least two ``defects" \cite{entin1988penrose}.

\begin{figure}
    \centering
    \subfloat[]{\includegraphics[width=0.38\linewidth]{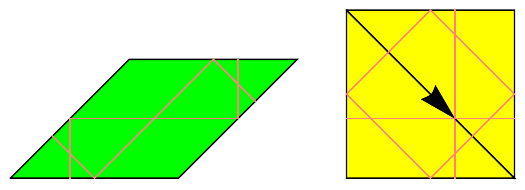}}\\[-1ex]
    \subfloat[]{\includegraphics[width=\linewidth]{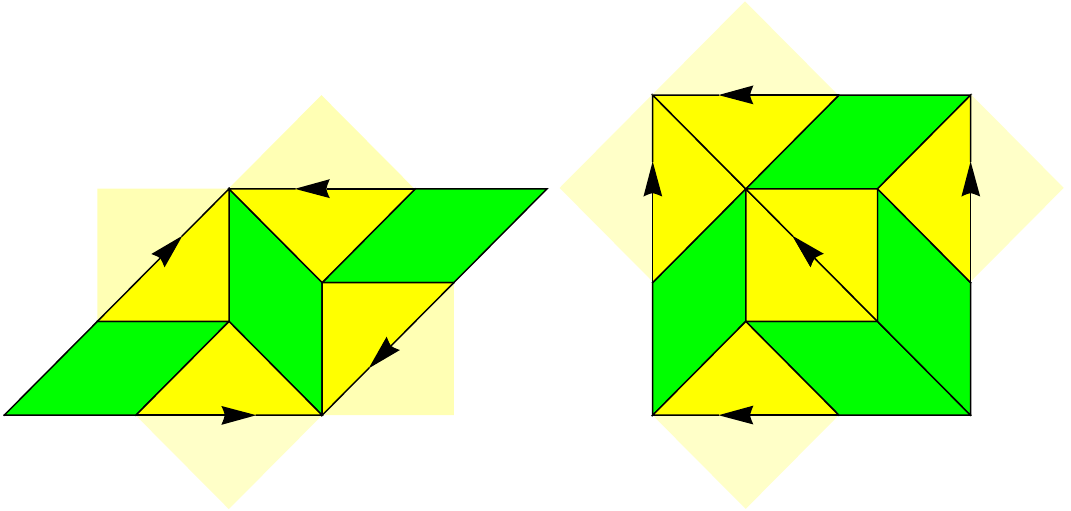}}\\[-1ex]
    \caption{(a) The two fundamental tiles in the Ammann–Beenker tiling, decorated by Ammann bars indicating a matching rule.  The arrow in each square shows its orientation (so it is only symmetric under reflection across the corresponding diagonal). 
    (b) The corresponding inflation rule. }
    \label{fig:ABinflationrule}
\end{figure}

Luckily, there are some tilings where this obstruction can be circumvented.
One example is the Ammann–Beenker (AB) tiling \cite{ammann1992aperiodic, beenker1982algebraic, grunbaum1987tilings, baake2013aperiodic}.
The AB tiling may be defined by the inflation rule shown in \reffig{fig:ABinflationrule}\footnote{Although we draw arrows and Ammann bars on the fundamental tiles, the tiling introduced here is called the {\it undecorated} AB tiling \cite{baake2013aperiodic}. 
The closely related {\it decorated} AB tiling has additional decorations, so that the local matching rule actually forces aperiodicity, which we do not want.}.
AB tilings share many similar properties with PTs. For example, they are aperiodic, are characterized by a kind of quasi-rotational symmetry forbidden in any periodic crystal (eight-fold in the AB case, ten-fold in the PT case),\footnote{We say a tiling $T$ has $n$-fold quasi-rotational symmetry if any finite patch $T_{K}$ in the tiling, no matter how large, also occurs (indeed, an infinite number of times) elsewhere in the tiling, and rotated by every integer multiple of $2\pi/n$.
} and each tile comes with Ammann bars \cite{socolar1989simple, boyle2022coxeter}, which must join into straight, unbroken Ammann lines.

However, unlike the PT, the AB tiling does {\it not} come with aperiodic matching rules \cite{cmp/1104162601}:
given any finite set of allowed local patterns (no matter how large) in the AB tilings, there still exist periodic tilings in $\mathbb R^2$, known as periodic approximants \cite{duneau1989approximants, goldman1993quasicrystals}, whose local patterns are all drawn from this set.
Such periodic, AB-like tilings are \emph{not} genuine AB tilings, which must be aperiodic (by virtue of the inflation/deflation rule that defines them). 
Nevertheless, we can make use of these periodic AB-like tilings (which can be constructed to be locally indistinguishable from genuine AB tilings, up to any arbitrarily chosen scale) to construct quantum error-correcting wavefunctions, in a fashion similar to \refeq{eq:codegs}.

\begin{figure}
    \centering
    \subfloat[]{\includegraphics[width=\linewidth]{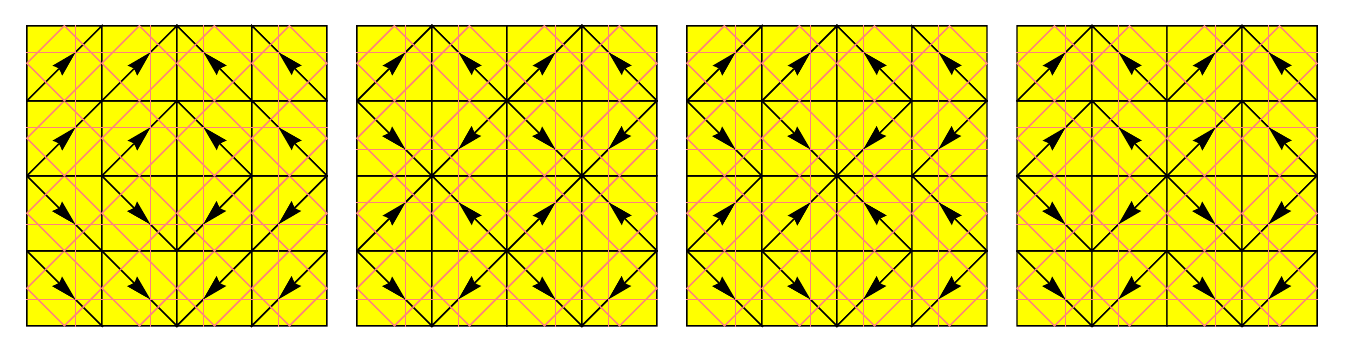}}\\[-1ex]
    \subfloat[]{\includegraphics[width=0.8\linewidth]{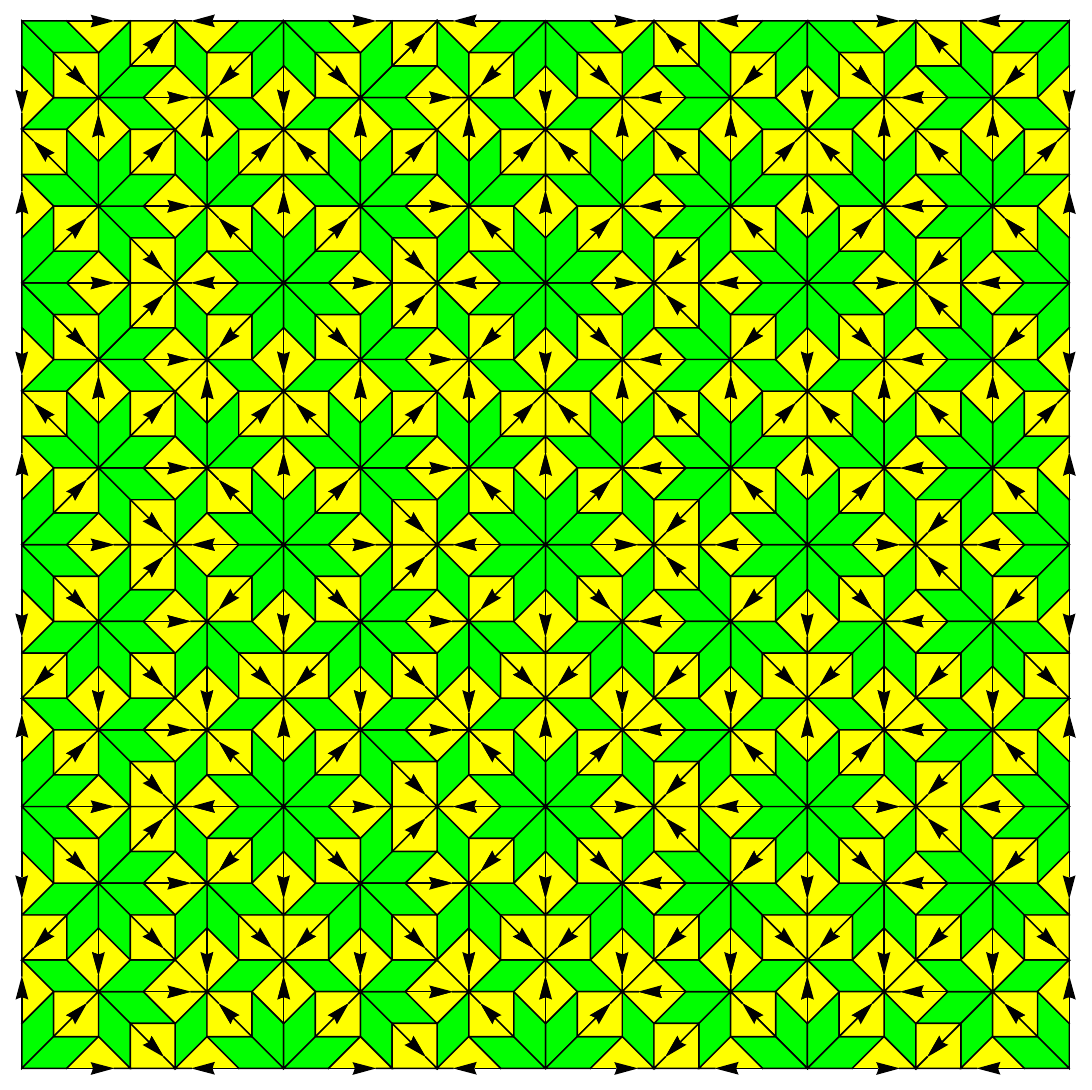}}\\[-1ex]
    \caption{Error-correcting code via Ammann–Beenker tilings. (a) Initial tilings denoted by $A_i^{(0)}$ $(i=1,2,3,4)$ on the torus, together with their Ammann lines. (b) The pattern after inflating $A_1^{(0)}$ twice. It locally resembles a genuine AB tiling.
    }
    \label{fig:AB}
\end{figure}

First, we define four ``initial" tilings $A_i^{(0)}$ $(i=1,2,3,4)$ of a torus, shown in \reffig{fig:AB}(a). 
The tilings are designed so that the Ammann bars form straight unbroken Ammann lines on the torus, and squares of different orientations appear with equal frequency.
Next, we inflate them by repeating the AB inflation rule $n$ times, to obtain four tilings $A_i^{(n)}$, see \reffig{fig:AB}(b) for an illustration of $A_1^{(2)}$.
As a convention, we fix the size of each elementary tile (e.g., each edge has unit length), so the size of $A_i^{(n)}$ grows exponentially with $n$.

Interestingly, $A_i^{(n)}$ for $n\geq 2$ are tilings of the torus that locally resemble genuine AB tilings and are also locally indistinguishable from each other.
More precisely: 
\begin{itemize}
    \item any pattern of $A_i^{(n)}$ inside a disk of radius $r_n=\Theta((1+\sqrt{2})^n)$ ($\propto$ the linear size of $A_i^{(n)}$) also appears in any genuine AB tiling and vice versa.
    \item the number of appearances of a radius $r_n$ pattern in $A_i^{(n)}$ only depends on the pattern itself and is independent of $i$.
\end{itemize}
The proof can be found in Appendix \ref{app-ABfinite}.

Now, we can construct wavefunctions
\begin{equation}
    \ket{\Psi_i^{(n)}}\propto\int dg \ket{gA_i^{(n)}}.
\end{equation}
Here, we only integrate $g$ over the {\it translation} group of the torus (since if we also included rotations, $A_3$ and $A_4$ would become equivalent).

These wavefunctions $\ket{\Psi_i^{(n)}}$ span the code space ${\cal C}$ of a QECC capable of correcting erasures of any radius $r_n$ disk.
As in the PT QECC construction, to verify this, we need only check that the tilings have the needed local indistinguishability and recoverability properties.  The former property is what we just explained; the latter again follows from the Ammann lines, see appendix \ref{app-ABA}.

In appendix \ref{app-finite}, combining the ideas of Secs. \ref{sec:discrete} and \ref{sec:AB}, we construct a QECC that is both finite and discrete and generalizes to any number of spatial dimensions.

\section{Discussion}

In this paper, we leverage the fascinating properties of quasiperiodic tilings (including PTs, AB tilings, and Fibonacci quasicrystals), to construct novel QECCs.  By quantumly superposing PT configurations, we construct a QECC capable of correcting errors in any finite spatial region $K$. Our construction is based on quite general properties of quasiperiodic tilings (namely local indistinguishability and recoverability) and thus remains valid for other such tilings.  Using Fibonacci quasicrystals, we construct a discrete version of this QECC, which can in principle be realized on spin chains.  With the help of AB tilings, we showed how similar QECC codes could be realized in systems of finite spatial extent (on a torus).

It is instructive to draw an analogy with the celebrated toric code \cite{kitaev2003fault}.
Toric code wavefunctions can be written as superpositions of loop configurations on a torus.  Based on the parities of the intersection numbers of these loops with two nontrivial cycles of the torus, the loop configurations are classified into four topologically distinct classes, giving a QECC with a four-dimensional code space.
Our wavefunction \refeq{eq:codegs} shares many properties with toric code wavefunctions:
both wavefunctions are superpositions of geometrical patterns and the logical information is encoded in the global behavior of the patterns;
both have notions of local recoverability and local indistinguishability at the level of geometrical patterns\footnote{For loop configurations, local recoverability means one can determine the parities of the intersection numbers even if a local region of the loop is erased; and local indistinguishability means the pattern of the loop inside a local region says nothing about the global topology (or linking number parities) of the loop.}.
Moreover, both are long-range entangled in the sense that the wavefunction cannot be prepared by finite-depth geometrically-local unitary circuits \cite{PhysRevLett.97.050401}.

On the other hand, while the toric code can correct erasures even if the erased region is noncontiguous and comparable in total to the size of the torus (as long as it does not contain topologically nontrivial loops); 
our finite-size toric construction in \refsec{sec:AB} and the proof therein relies on the fact that the errors are contained in a single contiguous region (whose size again scales linearly with the system size). 
Another difference is that our code space cannot be realized as the space of ground states of any local Hamiltonian.
To see this, note that in \refeq{eq:codegs}, we could insert a configuration-dependent phase factor and define
\begin{equation}
    \ket{\Tilde\Psi_{[T]}}=\int dg e^{i\theta(gT)}\ket{gT}.
\end{equation}
Following a similar calculation, $\ket{\Tilde\Psi_{[T]}}$ has the same reduced density matrix $\rho_{K}$ as $\ket{\Psi_{[T]}}$ for any finite region $K$. 
Therefore, any local Hamiltonian must have the same energy on $\ket{\Psi_{[T]}}$ and $\ket{\Tilde\Psi_{[T]}}$.
One might try to enlarge the code space ${\cal C}$ by including $\{\ket{\Tilde\Psi_{[T]}}\}$ as additional basis vectors, but this does not work, since then superpositions of $\ket{\Psi_{[T]}}$ and $\ket{\Tilde\Psi_{[T]}}$ no longer have the same reduced density matrix $\rho_{K}$.
In this regard, the construction \refeq{eq:codegs} might be better considered as a ground state for the matching rule under a \emph{gauge constraint} imposing translational invariance.  

It will be interesting to revisit the question of how best to create such QECCs in the lab, implement error correction algorithms, and carry out encoding, decoding, and logical operations in this context.

To end on a more speculative note, we mention several hints suggesting that the PT QECC discussed here may capture something about quantum gravity, and the way that the quantum gravitational microstates underlying a spacetime related to (or encode) that spacetime.  (i) First, it has been realized that the holographic picture of quantum gravity in hyperbolic space \cite{maldacena1999large} is itself a kind of quantum error correcting code \cite{Almheiri:2014lwa}; and, moreover, that when one discretizes the hyperbolic space on a tiling that preserves a large discrete subgroup of its original isometry group \cite{Pastawski:2015qua}, this tiling naturally decomposes into a stack of Penrose-like (or Fibonacci-like) aperiodic tilings \cite{Boyle:2018uiv}.  (ii) Second, as emphasized above, the states $\ket{T}$ and $\ket{gT}$ are only distinguishable in the presence of an absolute reference frame.  However, one of the central ideas underlying Einstein's theory of gravity is the principle of general covariance which, physically, asserts that there is {\it no} such absolute reference frame (and, mathematically, asserts that diffeomorphism invariance is an exact gauge symmetry of the laws of nature).  In other words, if one asks {\it where} the ``Penrose tiling microstate" is situated in space then, in order to respect diffeomorphism invariance, the answer must be: it is in a superposition of all possible ways it could be situated -- i.e. it is precisely in one of our code states \refeq{eq:codegs}!  (iii) Third, the intrinsic phase ambiguity about {\it how} to superpose microstates in this way -- {\it i.e.}\ the freedom to define the inequivalent bases $\{\ket{\Psi_{[T]}}\}$ vs $\{\ket{\Tilde{\Psi}_{[T]}}\}$, corresponding to an infinite number of distinct, inequivalent ways to embed the code space ${\cal C}$ in the larger Hilbert space ${\cal H}$ -- seems to reflect the fact that in spacetime there is an intrinsic ambiguity (again due to diffeomorphism invariance) about how to define the ``zero particle state" in that spacetime, leading to an infinite number of inequivalent vacuum states (which is one way to understand {\it e.g.}\ the phenomenon of Hawking radiation from a black hole \cite{birrell1984quantum, mukhanov2007introduction}).  (iv) Fourth, it is natural to wonder whether the analogous QECC built from the four-dimensional Elser-Sloane tiling \cite{elser1987highly} (the beautiful and essentially unique 4D cousin of the 2D Penrose tiling \cite{boyle2022coxeter}, which can be obtained by taking a maximally-symmetric 4D slice of the remarkable 8-dimensional $E_{8}$ root lattice) might provide a particularly interesting model for 4D spacetime.

\begin{acknowledgments} 
We thank Hilary Carteret, Timothy Hsieh, and Beni Yoshida for their helpful discussions.  We also want to thank Hilary Carteret \cite{Carteret} and David Chester \cite{sym14091780} for bringing to our attention their very creative works (which are independent of one another, and orthogonal to the topic and approach in this paper) suggesting possible connections between topological quantum computation and quasicrystals.
Research at Perimeter Institute is supported in part by the Government of Canada through the Department of Innovation, Science and Economic Development and by the Province of Ontario through the Ministry of Colleges and Universities.
\end{acknowledgments}

\bibliography{ref.bib}

\newcommand{\modone}{~(\text{mod} 1)}
\begin{appendix}

\section{Quantum Error-Correcting Code Space}\label{app-qecc}

Following Ref.~\cite{kitaev2002classical}, one way to formulate the defining property for a quantum code (by definition, a subspace $\mathcal{C}\subseteq \mathcal{H}$) to correct arbitrary errors in a region $K$ is that, for any operators $\mathcal{O}_1$ and $\mathcal{O}_2$ acting on $K$ and any two orthogonal states $\ket{\xi_1}, \ket{\xi_2}\in \mathcal{C}$, we have:
\begin{equation}\label{eq-defQECC}
    \mathcal O_1\ket{\xi_1}\perp \mathcal O_2\ket{\xi_2}.
\end{equation}
Under this condition, two distinct (orthogonal) states are still distinct (orthogonal) after errors, so that in principle the pre-error states can be identified and reconstructed from the post-error states, and hence the name quantum error correcting code (QECC).
Indeed, it is well known that the above definition is equivalent to \emph{either} of the following properties \cite{PhysRevA.55.900}, as mentioned in the main text:
\begin{itemize}
    \item (quantum recoverability) there exists a quantum channel $\mathcal{R}$ such that $\mathcal{R}(\Tr_K\ket{\xi}\bra{\xi})=\ket{\xi}\bra{\xi}$ for $\forall\ket{\xi}\in\mathcal C$;
    \item (quantum indistinguishability) $\Tr_{K^c}\ket{\xi}\bra{\xi}$ is independent of $\ket{\xi}\in\mathcal C$.
\end{itemize}
Colloquially speaking, the erasure of a certain region is correctable if and only if the region contains no logical quantum information.

Now we prove the criteria \refeq{eq-QECcriteria} used in the main text -- namely, a subspace $\mathcal{C}$ spanned by vectors $\ket{\psi_i}$ (possibly unnormalized, non-orthogonal, or over-complete) is a QECC capable of correcting erasure of $K$ if and only if
\begin{equation}\label{eq-QECcriteria-app}
\Tr_{K^c}\ket{\psi_i}\bra{\psi_j}=\braket{\psi_j|\psi_i}\rho_K,~~\forall i,j.
\end{equation}
\begin{proof}
    It is clear that \refeq{eq-defQECC} is equivalent to:
\begin{equation}
    \braket{\xi_1|\mathcal{O}|\xi_2}=0 
\end{equation}
for $\forall \ket{\xi_1}\perp\ket{\xi_2}$ and $\forall\mathcal{O}$ acting on $K$.
Equivalently,
\begin{equation}\label{eq-appQEC}
    \Tr_{K^c}\ket{\xi_2}\bra{\xi_1}=0,~~\forall \ket{\xi_1}\perp\ket{\xi_2}.
\end{equation}

To show \refeq{eq-appQEC} from \refeq{eq-QECcriteria-app}, we just expand $\ket{\xi_1}$ and $\ket{\xi_2}$ using the basis $\{\ket{\psi_i}\}$.
For the opposite direction, picking an orthonormal basis $\{\ket{\xi_m}\}$ for $\mathcal{C}$ and applying  \refeq{eq-appQEC} to $\ket{\xi_m}\pm\ket{\xi_n}$, we find that $\Tr_{K^c}\ket{\xi_m}\bra{\xi_m}$ must be $m$-independent, denoted by $\rho_K$.
\refeq{eq-QECcriteria-app} is then proved from \refeq{eq-appQEC} and the above fact by decomposing $\ket{\psi_i}$ and $\ket{\psi_j}$ in the $\{\ket{\xi_m}\}$ basis.
\end{proof}

\section{Local Recoverability}
In this section, we show that Penrose tilings and Ammann-Beenker tilings can be uniquely recovered from their Ammann lines, and nonsingular Fibonacci quasicrystals can be uniquely recovered from the complement of any finite region.

\subsection{Penrose tilings}\label{app-PTA}
PTs can be uniquely reconstructed from their Ammann lines.
The reconstruction is based on the following observations:
\begin{itemize}
    \item 
    Two Ammann lines intersect at angle $\pi/5$ if and only if the intersection is the midpoint of a single-arrow edge. 
    This is because Ammann bars in the interior of a tile never intersect at $\pi/5$.
    \item 
    Two single-arrow edges meet at angle $4\pi/5$ if and only if they belong to the same thin rhombus.    
\end{itemize}

Hence, we can first bisect all $\pi/5$ angles and recover all single-arrow edges (their directions can also be recovered via Ammann lines).
Then we find all points where two single-arrow edges meet at $4\pi/5$ and recover all thin rhombi.
Thick rhombi are then automatically recovered.

\subsection{Ammann-Beenker tilings}\label{app-ABA}
Similarly, AB tilings can also be reconstructed from their Ammann lines.
Indeed, in an AB tiling,  rhombi are in one-to-one correspondence with intersections of Ammann lines at angle $\pi/4$ such that there are no more intersections within a distance of $1/2$ (in units of the side length of the rhombi).
To reconstruct, one first finds all such intersections and recovers all rhombi.
All squares then automatically appear, and their orientations may also be determined from the Ammann lines.

The periodic version of the AB tiling in \refsec{sec:AB} can be reconstructed in a similar fashion.

\subsection{Fibonacci Quasicrystals}\label{sec-recover-1D-infinite}
A 1D Fibonacci quasicrystal, represented as a bit string $\{a_n\}$, is generated (given a $\gamma\in \mathbb{R}\backslash\mathbb{Z}$) by calculating the decimal part of $\tau n+\gamma$ (here $\tau=\frac{\sqrt{5}-1}{2}$) as follows \cite{796d525f-e8d5-3468-a195-8c41cbb1f728}:
\begin{equation}\label{eq-appfib1}
    a_n=\begin{cases}
        1, \text{~if~}\tau n+\gamma \in[1-\tau,1) \modone \\
        0, \text{~if~}\tau n+\gamma \in[0,1-\tau) \modone,
    \end{cases}
\end{equation}
or
\begin{equation}\label{eq-appfib2}
    a_n=\begin{cases}
        1, \text{~if~}\tau n+\gamma \in(1-\tau,1] \modone \\
        0, \text{~if~}\tau n+\gamma \in(0,1-\tau] \modone.
    \end{cases}
\end{equation}
The two cases of $\gamma\equiv\tau n \modone ~(n\in\mathbb{Z})$ are singular, which we exclude.
For the nonsingular case, \refeq{eq-appfib1} and \refeq{eq-appfib2} coincide.

\begin{lemma}
If the set $\{\tau n_i \modone\}$ is dense in the unit circle $\mathbb{R}/\mathbb{Z}$ for a subset of integers $\{n_i\}$, then $\{a_{n_i}\}$ determines $\gamma\modone$.
\end{lemma}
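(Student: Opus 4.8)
The plan is to show that knowing the digits $\{a_{n_i}\}$ pins down $\gamma \pmod 1$ by exploiting the defining formula \refeq{eq-appfib1} together with the density hypothesis. First I would observe that, by \refeq{eq-appfib1}, each digit $a_{n_i}$ is exactly the indicator of the event $\{\tau n_i + \gamma \in [1-\tau, 1) \pmod 1\}$, i.e.\ of whether the point $\tau n_i + \gamma$ lands in a fixed half-open arc $I = [1-\tau,1)$ of length $\tau$ on the circle $\mathbb{R}/\mathbb{Z}$. So the data $\{a_{n_i}\}$ is precisely the knowledge, for every $i$, of whether the translate $\tau n_i + \gamma$ lies in $I$ or in its complement $I^c = [0,1-\tau)$.

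Next I would argue by contradiction: suppose two values $\gamma$ and $\gamma'$ with $\gamma \not\equiv \gamma' \pmod 1$ both produce the same bit string $\{a_{n_i}\}$. Then for every $i$, the points $\tau n_i + \gamma$ and $\tau n_i + \gamma'$ lie in the same one of the two arcs $I$, $I^c$. Equivalently, writing $t_i = \tau n_i \pmod 1$ and $\delta = \gamma' - \gamma \not\equiv 0 \pmod 1$, the translation by $\delta$ maps $t_i + \gamma$ into the same arc as $t_i + \gamma$ for every $i$; that is, no point of the form $t_i + \gamma$ is separated from its $\delta$-translate by one of the two boundary points of the arc $I$, namely $1-\tau$ and $1$ (i.e.\ $0$). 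The set of points $x \in \mathbb{R}/\mathbb{Z}$ such that $x$ and $x + \delta$ lie on opposite sides of the two-point set $\{0,\, 1-\tau\}$ is a union of two arcs, each of length $|\delta| \pmod 1$ (in the appropriate sense), hence a nonempty open set $B_\delta$ of positive measure whenever $\delta \not\equiv 0$. By the density hypothesis, $\{t_i + \gamma\} = \{\tau n_i + \gamma \pmod 1\}$ is dense in the circle (a fixed rigid translate of a dense set is dense), so some $t_i + \gamma$ lands in the open set $B_\delta$, giving a digit on which $\gamma$ and $\gamma'$ disagree — a contradiction.

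The one point requiring a little care — and the place I expect the only real friction — is the handling of the half-open boundary: because $I$ is half-open, the "bad set" $B_\delta$ of separating points is itself a union of half-open arcs rather than genuinely open, so I cannot immediately invoke density against it. I would handle this either by (i) noting that we may freely use formula \refeq{eq-appfib2} instead, and that in the \emph{nonsingular} case all quantities $\tau n_i + \gamma$ avoid the endpoints $0$ and $1-\tau$ altogether (since landing on an endpoint would force $\gamma \equiv \tau m \pmod 1$ for some integer $m$, the excluded singular case), so the boundary is never actually attained and $B_\delta$ may be taken open after discarding a measure-zero set of exceptional translates; or (ii) replacing "dense" by the stronger fact that a dense \emph{orbit} of an irrational rotation meets every arc of positive length, and arguing that $B_\delta$ contains an honest open arc of length roughly $|\delta|$. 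Either way the core mechanism — one fixed arc, the bit string records side-of-arc membership, density forces a disagreement as soon as $\gamma \neq \gamma'$ — goes through, so the lemma follows.
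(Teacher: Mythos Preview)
Your proposal is correct and follows essentially the same approach as the paper: a contradiction argument using density to find an index $n_i$ at which $\gamma$ and $\gamma'$ produce different digits. The paper's version is terser---it simply picks $n_i$ so that the single boundary point $1-\tau$ lies in the open arc $(\tau n_i+\gamma,\,\tau n_i+\gamma')$ on the circle, which is exactly your option (ii) of locating an honest open sub-arc of the separating set $B_\delta$; your more careful discussion of the half-open endpoints and the nonsingularity assumption is additional rigor that the paper leaves implicit.
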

\begin{proof}
    Suppose $\gamma\neq\gamma'$. 
    By the density of $\{\tau n_i \modone\}$, we can find an $n_i$ such that $1-\tau\in (\tau n_i+\gamma, \tau n_i+\gamma')$ (also mod 1).
    This implies $a_{n_i}\neq a'_{n_i}$, a contradiction.
\end{proof}

Now, for any $\{n_i\}$ that is the complement of a finite subset, $\{\tau n_i \modone\}$ is always dense. Hence any finite region (finite substring) of the 1D Fibonacci quasicrystal can be uniquely determined from its complement.

There is another, more algorithmic proof. 
The basic idea is that, if we are given a 1D Fibonacci quasicrystal except for a finite ``hole," we can use the deflation process to ``repair" the hole: we perform the deflation operation on as much of the tiling as we can ({\it i.e.}\ everywhere except in the immediate vicinity of the hole); and then iterate this deflation process.
After a finite number of deflations, the ``hole" has become comparable to (or smaller than) the size of the new supertiles produced by the deflation, and may be repaired (uniquely recovered) from knowledge of its neighborhood. 
For more details, see appendix \ref{app-recover-finite} (while the proof there is for a finite version, it can be modified straightforwardly for the infinite case).

\section{Local Indistinguishability}\label{app-indis}

Local indistinguishability is most easily understood from inflation and deflation. Given a finite patch $K$ of a Penrose tiling $T$, we can deflate $T$ enough times (thereby grouping the original tiles into larger and larger ``supertiles") so that $K$ is entirely contained in a single vertex configuration. Here, a vertex configuration means a vertex point as well as the elementary (supertile) rhombi that touch that vertex. There are only 7 possible vertex configurations (up to rotation) \cite{grunbaum1987tilings}. 
On the other hand, we can check that each vertex configuration must appear inside a
single supertile, as long as we inflate the tile sufficiently many times; and any Penrose tiling $T'$ must contain arbitrarily large supertiles, hence must contain all types of vertex configurations, hence must contain $K$.

The above argument can be made more quantitative to prove the strong local indistinguishability, namely, the {\it frequency} of a given patch $K$ is also independent of the Penrose tiling $T$. 
In appendix \ref{app-ABfinite} we show this for the periodic version of the Ammann-Beenker tiling. A similar proof also works for the Penrose tiling.

Moreover, the relative frequencies of the various local patterns are actually computable from the inflation rule alone.
In the following, as the simplest example, we show how to compute the frequencies of substrings in the discrete realization of the Fibonacci quasicrystal.

We rewrite the inflation rule $(L, S)\to(LS, L)$ as a substitution matrix:
\begin{equation}
    M^{(1)}=\begin{pmatrix}
    1~1\\
    1~0
    \end{pmatrix}.
\end{equation}
The first and second columns represent the inflation $L\to LS$ and $S\to L$ respectively.
If we start with a string with $x_0$ $``L"$s and $y_0$ $``S"$s, then after $n$ steps of inflation, the number of $``L"$s and $``S"$s are given by $(x_n, y_n)^T=M_1^n(x_0, y_0)^T$. 
In the $n\to\infty$ limit, the relative frequencies are determined by the eigenvector of $M^{(1)}$ corresponding to the unique largest eigenvalue (the ``Perron-Frobenius eigenvector"):
\begin{equation}\label{eq-frequency1}
\nu_1(L)=\frac{\sqrt{5}-1}{2}~~,\nu_1(S)=\frac{3-\sqrt{5}}{2}.
\end{equation}

To calculate the relative frequencies of length-2 substrings, we construct the \textit{induced substitution matrix} as follows. 
For each legal length-2 string $\omega\in\{LL, LS, SL\}$ (no $SS$), we construct the inflation $\omega'$ of $\omega$ (so $\omega'$ is $LSLS$, $LSL$ or $LLS$, respectively), and then we list the first $k$ (overlapping) length-2 strings in $\omega'$, where $k$ is the number of symbols in the inflation of the {\it first} digit of $\omega$ (so $k$ is $2$, $2$ and $1$, respectively).
This gives us the induced inflation rule: 
\begin{equation}
\begin{aligned}
    &(LL)\to \{(LS),(SL)\}\\
    &(LS)\to \{(LS),(SL)\} \\
    &(SL)\to \{(LL)\},
\end{aligned}
\end{equation}
and hence the induced inflation matrix
\begin{equation}
    M^{(2)}=\begin{pmatrix}
    0~0~1\\
    1~1~0\\
    1~1~0
    \end{pmatrix}.
\end{equation}
Its Perron-Frobenius eigenvector determines the relative frequencies of the length-2 substrings:
\begin{equation}\label{eq-nu2}
    \nu_2(LL)=\sqrt{5}-2,~\nu_2(LS)=\nu_2(SL)=\frac{3-\sqrt{5}}{2}.
\end{equation}

Another way to determine the above frequencies is to use the ``parent strings" in \refsec{sec-finiteindis}.
For example, since (1) the number of $LL$s in a quasicrystal must equal the number $S$s in the deflated quasicrystal and (2) the total number of letters increases by a factor of $\frac{\sqrt{5}+1}{2}$ under inflation, it follows that $\nu_2(LL)=\nu_1(S)/\frac{\sqrt{5}+1}{2}=\sqrt{5}-2$, consistent with \refeq{eq-nu2}.

The relative frequencies of longer substrings may be determined similarly.

\section{Periodic Ammann-Beenker-like Tilings}\label{app-ABfinite}

\newcommand{\ABedgeA}{\vcenter{\hbox{\includegraphics[height=\baselineskip]{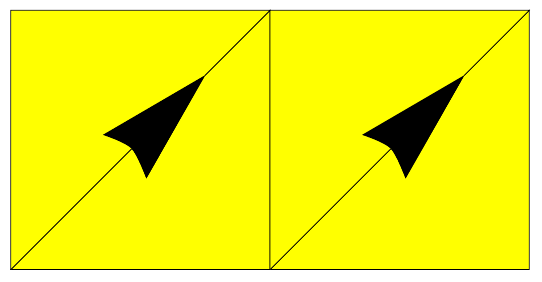}}}}
\newcommand{\ABedgeB}{\vcenter{\hbox{\includegraphics[height=\baselineskip]{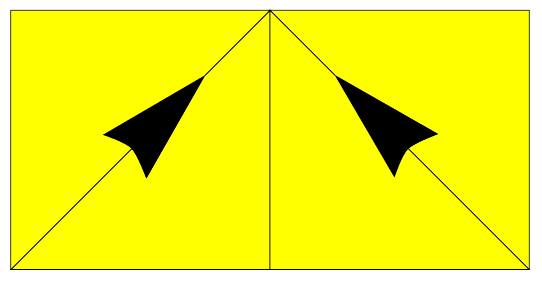}}}}
\newcommand{\ABedgeC}{\vcenter{\hbox{\includegraphics[height=2\baselineskip]{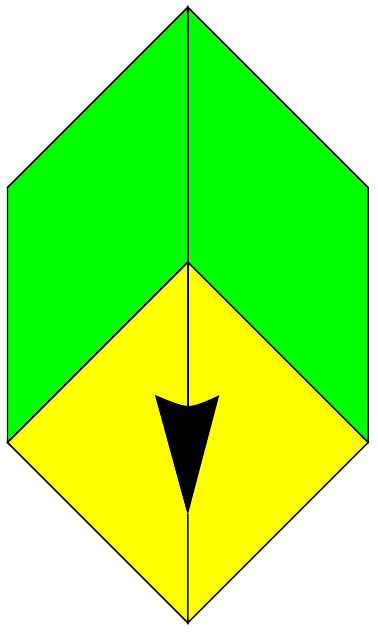}}}}
\newcommand{\ABedgeD}{\vcenter{\hbox{\includegraphics[height=2\baselineskip]{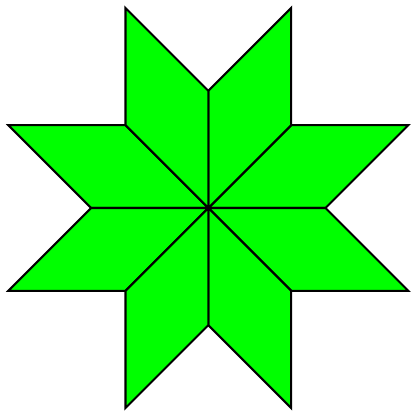}}}}

In this section, we prove the local indistinguishability of the periodic AB-like tilings constructed in \refsec{sec:AB}.

\begin{proposition}\label{claim-AB}
    For $n\geq 2$, there exists $r_n=\Theta((1+\sqrt{2})^n)$ such that for any disk $K$ of radius $r_n$, the pattern of $A_i^{(n)}$ inside $K$ is legal,
    and the number of appearances of this disk pattern in $A_i^{(n)}$ is independent of $i$.
\end{proposition}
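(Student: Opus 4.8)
The plan is to establish both claims---legality of radius-$r_n$ patterns and $i$-independence of their multiplicities---simultaneously, by tracking how disk patterns propagate through the inflation map and comparing against genuine AB tilings. First I would fix notation for the AB inflation operator $\sigma$, recalling that $\sigma$ scales lengths by the silver ratio $\lambda = 1+\sqrt2$, so a disk of radius $r$ in $\sigma(T)$ pulls back to a disk of radius roughly $r/\lambda$ in $T$ together with a bounded collar (of width comparable to the diameter of a single supertile) needed because a supertile straddling the disk boundary is only partially determined by the disk. The key structural input is that the four seeds $A_i^{(0)}$ were engineered so that (a) their Ammann bars already close up into unbroken Ammann lines on the torus, which is inflation-invariant, and (b) the four orientations of squares occur with equal frequency---and I would note that the inflation rule in \reffig{fig:ABinflationrule}(b) preserves both properties, so each $A_i^{(n)}$ has unbroken Ammann lines and equidistributed square orientations.

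The legality step is a ``local-to-global'' argument run backwards through inflation. Given a radius-$r_n$ disk $K$ in $A_i^{(n)}$, I would deflate $n-c$ times (for a suitable constant $c$) so that $K$ shrinks to a region contained in a single supertile---or more precisely a single \emph{vertex configuration} of the deflated tiling. Since the deflated tiling $A_i^{(c)}$ is obtained from the seed by only $c=O(1)$ inflations, and the AB inflation rule produces locally legal configurations from any legal starting configuration (the seeds being built out of legal AB vertex stars), every vertex configuration appearing in $A_i^{(c)}$ also appears in a genuine AB tiling. Re-inflating $n-c$ times then shows the original disk pattern in $A_i^{(n)}$ coincides with a disk pattern appearing in a genuine AB tiling, and conversely any radius-$r_n$ AB pattern, being contained in an $n$-times-inflated supertile, appears in $A_i^{(n)}$ once $n$ is large enough. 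Choosing $r_n = \Theta(\lambda^n)$ precisely as the largest radius for which the ``deflate to a single supertile'' bookkeeping closes (the constant loss $c$ is absorbed into the $\Theta$) gives the stated scaling, and simultaneously $r_n \propto$ the linear size of $A_i^{(n)}$.

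For the $i$-independence of multiplicities, the strategy is an induction on $n$ via the induced substitution matrix, exactly as sketched in appendices \ref{app-indis} and for the Fibonacci case. Let $N_i^{(n)}(P)$ be the number of occurrences of a legal radius-$r_n$ disk pattern $P$ in $A_i^{(n)}$. Each occurrence of $P$ in $A_i^{(n)} = \sigma(A_i^{(n-1)})$ sits inside the $\sigma$-image of a bounded-size patch $Q$ of $A_i^{(n-1)}$ (bounded because the collar width is $O(1)$ in the deflated scale), and the number of occurrences of $P$ in $\sigma(Q)$ depends only on $Q$, not on $i$; hence $N_i^{(n)}(P) = \sum_Q c(P,Q)\, N_i^{(n-1)}(Q)$ with $i$-independent coefficients $c(P,Q)$. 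So it suffices to show the vector of patch-counts of $A_i^{(0)}$ (for all patches up to the relevant bounded size) is $i$-independent---and this is exactly the design constraint that all four orientations of squares, and hence all small legal patches, occur with equal frequency in the seeds; combined with the torus closure of Ammann lines this pins down the small-patch statistics of the seeds identically for all $i$. The induction then propagates $i$-independence up to level $n$.

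The main obstacle I expect is the boundary/collar bookkeeping: making precise how much of a disk pattern is determined after one deflation, and showing the ``undetermined collar'' has width bounded uniformly in $n$ so that the recursion $N_i^{(n)} = (\text{induced matrix}) \, N_i^{(n-1)}$ has a \emph{fixed}, $n$-independent set of patch types. This is where the specific geometry of the AB inflation rule (that a supertile's interior is fully determined, and only its boundary matching requires neighbors) must be used carefully, and where the constant $c$ and the implied constants in $r_n = \Theta(\lambda^n)$ actually get fixed. A secondary subtlety is verifying the converse direction---that every genuine-AB radius-$r_n$ pattern really does occur in $A_i^{(n)}$---which needs the fact that genuine AB tilings are themselves generated by the same inflation from legal vertex stars, so any bounded AB patch lies in an $n$-fold supertile for $n$ large, and that supertile is a sub-pattern of $A_i^{(n)}$ by construction of the seeds.
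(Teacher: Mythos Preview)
Your overall architecture---deflate a radius-$r_n$ disk down to a bounded-size patch and then verify legality and $i$-independence at a fixed finite level---is the same as the paper's. The paper also reduces to checking vertex configurations at a fixed level (namely $n=2$), though its counting argument differs in detail: rather than an induced-substitution recursion $N_i^{(n)}(P)=\sum_Q c(P,Q)\,N_i^{(n-1)}(Q)$, it contains $K$ in a single ``supervertex,'' classifies the containing supervertex into four types, and uses a weighted count $\tfrac14 m_1+\tfrac14 m_2+\tfrac12 m_3+m_4$ to avoid overcounting shared supertiles and superedges.

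However, your base case is genuinely wrong in two places, and this is exactly where the paper has to do real work. First, you assert that ``the seeds [are] built out of legal AB vertex stars,'' but they are not: the $A_i^{(0)}$ are $4\times 4$ grids of squares, and the corners where four squares meet are \emph{not} legal AB vertex configurations. The paper shows these corners only become legal after two inflations (this is why the proposition is stated for $n\geq 2$). Second, and more seriously, you claim that ``the small-patch statistics of the seeds [are] identical for all $i$,'' citing the equal frequency of square orientations. This is false at the level of edges: the paper's \reftab{tab:ABproof} shows explicitly that the counts of edge configurations in $A_i^{(0)}$ \emph{do} depend on $i$ (e.g.\ $A_1^{(0)}$ has two edges of a certain class while $A_2^{(0)}$ has zero). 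The crucial observation, which your proposal misses, is that these discrepancies \emph{cancel} after one inflation because distinct seed-level edge classes merge to the same $A_i^{(1)}$ configuration; similarly, the corner discrepancies cancel after two inflations. So your induction cannot be anchored at $n=0$; it must be anchored at $n=2$, and establishing the anchor requires the case-by-case verification (interior/edge/corner) that the paper carries out in its Lemma and \reftab{tab:ABproof}. Without that verification, equal square-orientation frequency alone is not enough to conclude $i$-independence of patch counts.
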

Here, for ease of writing, we say a configuration is ``legal" if it also appears in a genuine (infinite, aperiodic) Ammann–Beenker tiling.
We will prove the proposition by reducing it to the case where $K$ is a ``vertex configuration".
Here, a vertex configuration is a vertex point of $A_i^{(2)}$ (called the ``center" of the vertex configuration) together with the tiles (squares and rhombi) touching it. For example, $\ABedgeC$ and $\ABedgeD$ are vertex configurations. The number of possible vertex configurations is finite.

\begin{proof}   
    For any vertex configuration, denoted by $V^{(2)}$, we denote its $(n-2)^{\text{th}}$ inflation as $V^{(n)}$ and call it a supervertex configuration (or just supervertex, for short); it is composed of a vertex point together with the supersquares and/or superrhombi surrounding it.
    We choose $r_n$ so that any disk $K$ of radius $r_n$ must be entirely contained in at least one supervertex. 
    According to \reffig{fig:ABinflationrule}, after each inflation, the linear size grows by a factor of $1+\sqrt{2}$, the square root of the largest eigenvalue of the inflation matrix $M=\begin{pmatrix}
    3~2\\
    4~3
    \end{pmatrix}$ obtained from \reffig{fig:ABinflationrule}.
    Hence we can choose $r_n=(1+\sqrt{2})^{n-2}r_2$.
    
    Now let us consider how a disk configuration $K$ could appear in $A^{(n)}$.
    We view $A^{(n)}$ as a union (with overlap) of supervertices. 
    If a supervertex contains $K$, then we call such supervertex an extension of $K$ and we say $K$ appears in $A^{(n)}$ via such extension. 
    (There could be more than one extension of a given copy of $K$;
    moreover, if a given supervertex contains more than one copy of $K$ at different locations, these should be counted as different extensions.)
    
    To count the number of $K$ in $A^{(n)}$, we only need to list all possible extensions of $K$ and count the number of each extension in $A^{(n)}$.
    Each extension $V^{(n)}$ belongs to one of the following four types:
    (1) $K$ is entirely contained in a supersquare in $V^{(n)}$;
    (2) $K$ is entirely contained in a superrhombus in $V^{(n)}$;
    (3) $K$ is entirely contained in two neighboring supertiles in $V^{(n)}$ and $K$ crosses the boundary between them;
    (4) others (where $K$ overlaps the ``central" vertex of the supervertex configuration).
    If we denote the number of type-$j$ supervertices in $A^{(n)}$ as $m_j~(j=1,2,3,4)$, the number of $K$ in $A_i^{(n)}$ is given by:
    \begin{equation}
        \frac{1}{4}m_1+\frac{1}{4}m_2+\frac{1}{2}m_3+m_4.
    \end{equation}
    The prefactor $\frac{1}{4}$ before $m_1$ is because, if $K$ is contained in $A^{(n)}$ via a type-1 supervertex, it is also contained in three other supervertices centered on the other three corners of the supersquare. The prefactor $\frac{1}{2}$ before $m_3$ is because the ``superedge" between the two supertiles crossed by $K$ in a type-3 supervertex is shared by two supervertices.

    Therefore, to prove Prop. \ref{claim-AB}, we can assume that $K$ is a supervertex.
    And since the supervertices are inflations of vertex configurations in $A^{(2)}$, we only need to prove the claims for $n=2$.
\end{proof}
\begin{lemma}
    Each vertex configuration in $A_i^{(2)}~(i=1,2,3,4)$ is legal, and the number of times it appears in $A_i^{(2)}$ is independent of $i$.
\end{lemma}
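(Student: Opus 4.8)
The plan is to reduce the Lemma to a finite (if tedious) verification by exploiting the fact that the Ammann--Beenker inflation is a \emph{local} rule. First I would record that inflation has bounded interaction range: there is a constant $\rho$ such that the tiles of $\mathrm{infl}(T)$ within distance $d$ of a point are determined by the tiles of $T$ within distance $\rho+d/(1+\sqrt2)$ of the corresponding rescaled point. Since a vertex configuration has bounded diameter and $(1+\sqrt2)^{-1}<1$, it follows that any vertex configuration occurring in $\mathrm{infl}(T)$ is determined by a bounded patch of $T$: inspecting the inflation rule (\reffig{fig:ABinflationrule}), a new vertex lies either in the interior of an inflated tile, on an inflated edge, or at an old vertex of $T$, and in each case only a bounded neighbourhood of a single old vertex (its vertex configuration, possibly together with a few neighbours) is relevant. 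Applying this twice, every vertex configuration of $A_i^{(2)}=\mathrm{infl}^2(A_i^{(0)})$ is determined by the double inflation of a single vertex configuration of $A_i^{(0)}$; and since each $A_i^{(0)}$ is an explicit finite tiling of the torus (\reffig{fig:AB}(a)), there are only finitely many of these, all of which can be listed. This is precisely the reduction ``to the case $n=2$'' already made in the proof of Prop.~\ref{claim-AB}; what remains is that base case.

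For legality, I would, for each vertex configuration $X$ occurring in some $A_i^{(0)}$, form the patch $\mathrm{infl}^2(X)$, read off every vertex configuration it contains, and check each against the finite list of vertex configurations that occur in a genuine (infinite, aperiodic) AB tiling --- equivalently, against those obtained by inflating a single fundamental square or rhombus sufficiently many times. The Ammann-line condition imposed on the $A_i^{(0)}$, which is preserved under inflation, is what guarantees that these inflated patches really do reproduce AB patches (with the correct square orientations) rather than spurious variants. This is a finite check and settles the legality assertion.

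For the equality of multiplicities I would use the same linear bookkeeping by induced substitution matrices employed in Appendix~\ref{app-indis}: partition the vertices of $A_i^{(2)}$ according to whether the cell of $A_i^{(1)}$ containing them is a tile interior, an open edge, or a vertex, charging each new vertex to exactly one such cell by a fixed convention --- the same kind of overcounting correction that produced the prefactors $\tfrac14,\tfrac14,\tfrac12,1$ in the proof of Prop.~\ref{claim-AB}, needed to avoid overcounting cells shared between neighbouring inflation patches. This expresses the number of occurrences of a given vertex configuration $U$ in $A_i^{(2)}$ as a fixed linear function of the numbers of squares, rhombi, edge types, and vertex configurations in $A_i^{(1)}$; each of those is in turn a fixed linear function (an induced substitution matrix) of the corresponding counts in $A_i^{(0)}$. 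It therefore suffices to check that these base counts --- or at least the combinations of them that survive the double substitution --- coincide for $i=1,2,3,4$, and the seeds in \reffig{fig:AB}(a) are designed precisely so that they do (the stated fact that all square orientations occur equally often being one instance).

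The hard part here is not conceptual but combinatorial: correctly enumerating the vertex configurations of the four seeds and their double inflations, matching the resulting configurations against the AB vertex-configuration list, and --- for the counting statement --- handling carefully the overcounting conventions for vertices, edges, and tiles shared between adjacent inflation patches (the analogue of the $\tfrac14$ and $\tfrac12$ prefactors above). A secondary point to pin down is the precise interaction range of the AB inflation, i.e.\ confirming that two inflation steps really do suffice to reduce an arbitrary vertex configuration of $A_i^{(2)}$ to the double inflation of one vertex configuration of $A_i^{(0)}$.
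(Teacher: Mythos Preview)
Your outline is correct and follows essentially the same strategy as the paper: both reduce the claim to a finite check by using the locality of inflation and classifying vertices of $A_i^{(2)}$ by where they sit relative to the cell structure inherited from the seed. The paper's execution is somewhat more direct, however, and the comparison is instructive.

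Rather than iterating one inflation at a time through an induced substitution matrix, the paper decomposes $A_i^{(2)}$ straight into the ``supersquares'' obtained by double-inflating each square of $A_i^{(0)}$, and classifies vertex configurations as (1) interior to a supersquare, (2) centred on a superedge, or (3) centred on a supercorner. The crucial point---which your proposal alludes to with ``the combinations of them that survive the double substitution''---is made fully explicit: the four seeds genuinely \emph{differ} in their edge-type and corner-type counts (the paper tabulates this), but these differences collapse under inflation because distinct edge types of $A_i^{(0)}$ inflate to the \emph{same} configuration in $A_i^{(1)}$, and all corner configurations of $A_i^{(0)}$ inflate twice to the same legal vertex star in $A_i^{(2)}$. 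That collapse is precisely why two inflations suffice and is the real content of the verification; your induced-substitution-matrix framing would recover it, but only after writing out the matrices. One caution: your remark that the Ammann-line condition on the seeds ``is what guarantees'' legality after inflation is a heuristic, not an argument---unbroken Ammann lines are necessary but not sufficient for AB-legality, and the paper establishes legality for types (2) and (3) by the explicit identification just described, not by appeal to Ammann lines.
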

\begin{proof}
    Recalling that $A_i^{(2)}$ are the inflations of $A_i^{(0)}$, we can decompose each $A_i^{(2)}$ into 4-by-4 supersquares.
    Accordingly, there are three types of vertex configurations $V^{(2)}$:
    (1) $V^{(2)}$ is entirely contained in a supersquare;
    (2) the center of $V^{(2)}$ is on the boundary of two supersquares;
    (3) the center of $V^{(2)}$ is a corner of a supersquare.

    By construction, for any square orientation, the number of such squares in $A_i^{(0)}$ is independent of $i$, hence $A_i^{(2)}$ for $i=1,2,3,4$ contain the same number of type-1 vertices.
    Moreover, a square is legal, hence any type-1 vertex is also legal.

\newcommand{\ABedgeha}{\vcenter{\hbox{\includegraphics[height=0.35cm]{fig/ABboundary1.pdf}}}}
\newcommand{\ABedgehb}{\vcenter{\hbox{\includegraphics[height=0.35cm]{fig/ABboundary2.pdf}}}}
\newcommand{\ABedgehc}{\vcenter{\hbox{\includegraphics[height=0.35cm]{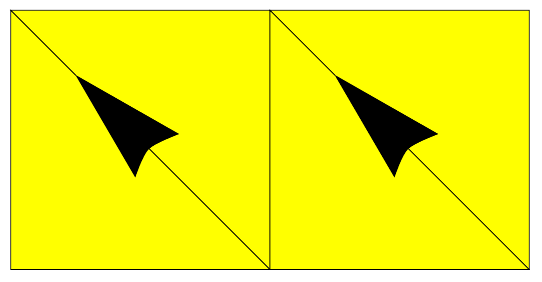}}}}
\newcommand{\ABedgehd}{\vcenter{\hbox{\includegraphics[height=0.35cm]{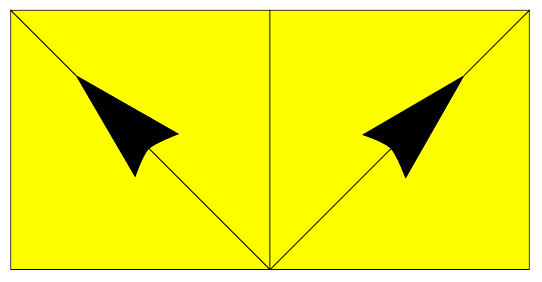}}}}
\newcommand{\ABedgehe}{\vcenter{\hbox{\includegraphics[height=0.35cm]{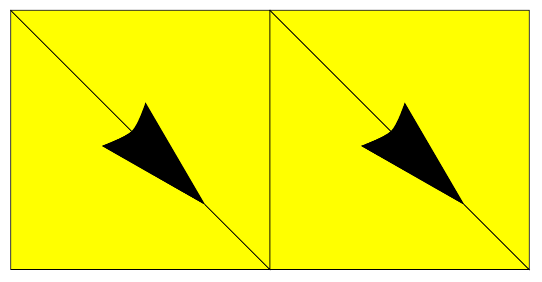}}}}
\newcommand{\ABedgehf}{\vcenter{\hbox{\includegraphics[height=0.35cm]{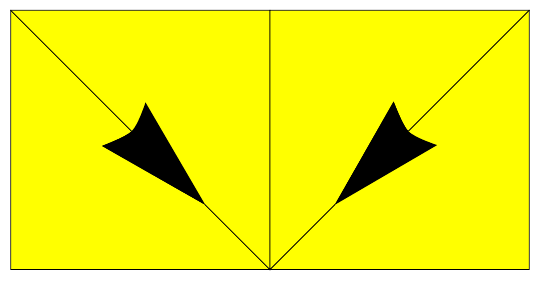}}}}
\newcommand{\ABedgehg}{\vcenter{\hbox{\includegraphics[height=0.35cm]{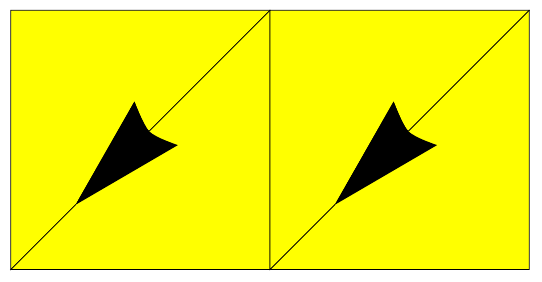}}}}
\newcommand{\ABedgehh}{\vcenter{\hbox{\includegraphics[height=0.35cm]{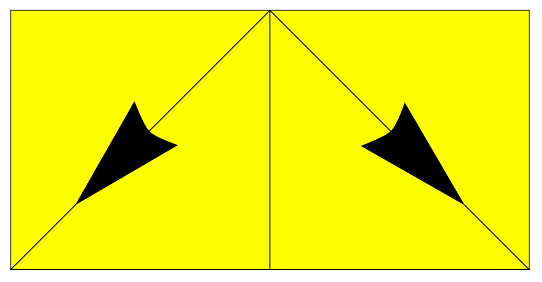}}}}
\newcommand{\ABedgeva}{\vcenter{\hbox{\includegraphics[height=0.7cm]{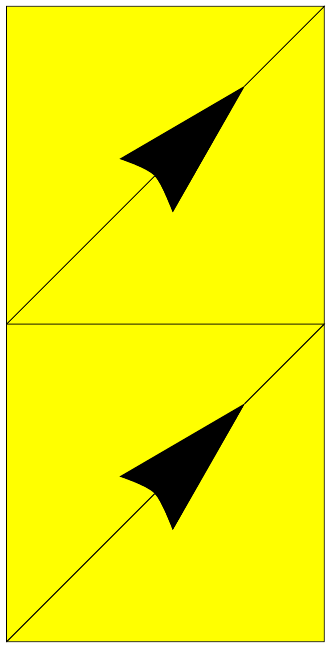}}}}
\newcommand{\ABedgevb}{\vcenter{\hbox{\includegraphics[height=0.7cm]{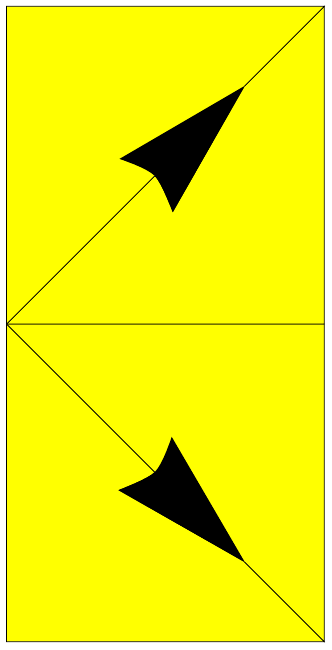}}}}
\newcommand{\ABedgevc}{\vcenter{\hbox{\includegraphics[height=0.7cm]{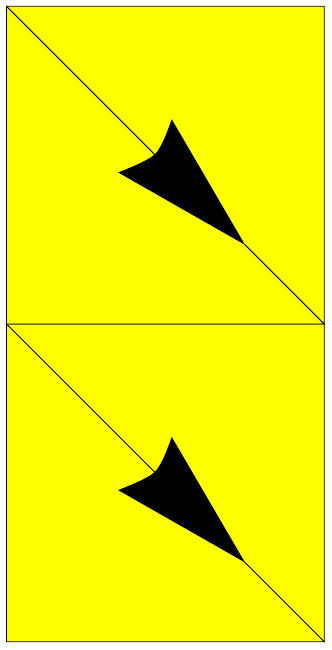}}}}
\newcommand{\ABedgevd}{\vcenter{\hbox{\includegraphics[height=0.7cm]{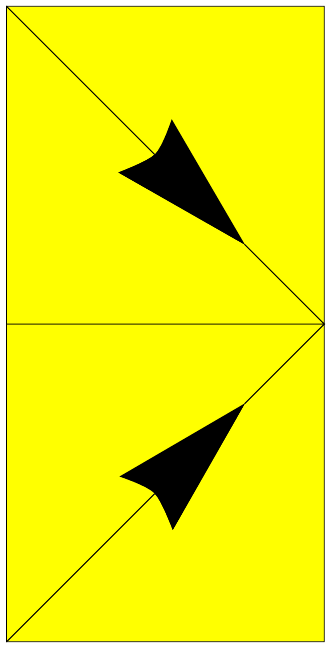}}}}
\newcommand{\ABedgeve}{\vcenter{\hbox{\includegraphics[height=0.7cm]{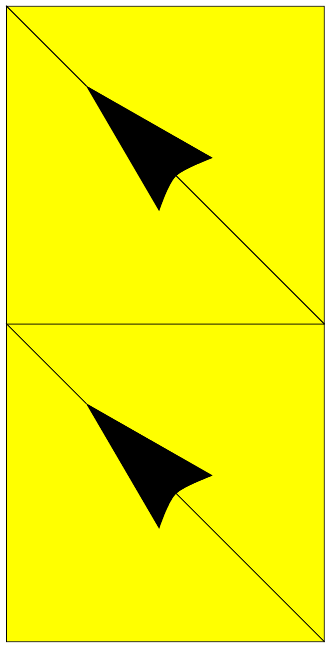}}}}
\newcommand{\ABedgevf}{\vcenter{\hbox{\includegraphics[height=0.7cm]{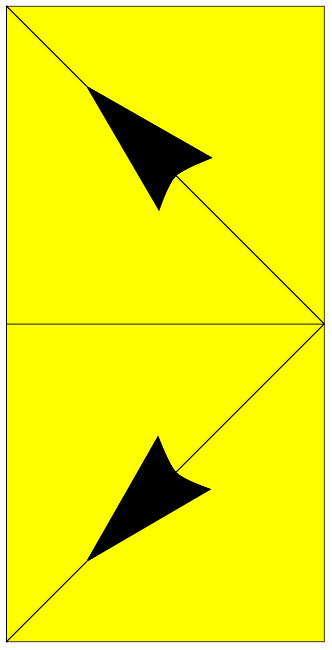}}}}
\newcommand{\ABedgevg}{\vcenter{\hbox{\includegraphics[height=0.7cm]{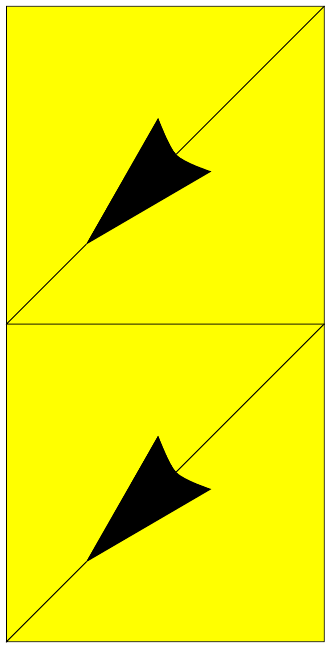}}}}
\newcommand{\ABedgevh}{\vcenter{\hbox{\includegraphics[height=0.7cm]{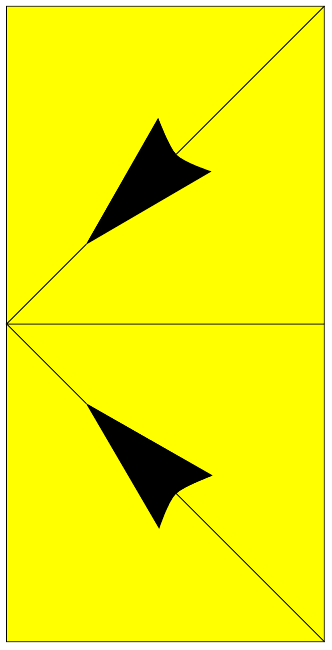}}}}
\newcommand{\ABedgeIa}{\vcenter{\hbox{\includegraphics[height=0.7cm]{fig/ABboundaryD.pdf}}}}
\newcommand{\ABedgeIb}{\vcenter{\hbox{\scalebox{1}[-1]{\includegraphics[height=0.7cm]{fig/ABboundaryD.pdf}}}}}
\newcommand{\ABedgeIc}{\vcenter{\hbox{\rotatebox{-90}{\includegraphics[height=0.7cm]{fig/ABboundaryD.pdf}}}}}
\newcommand{\ABedgeId}{\vcenter{\hbox{\rotatebox{90}{\includegraphics[height=0.7cm]{fig/ABboundaryD.pdf}}}}}
    \renewcommand\cellset{\renewcommand\arraystretch{1.1}\setlength\extrarowheight{1pt}}

\begin{table}
    \centering
    \begin{tabular}{c|c|c|c|c|c|c|c|c}
    \hline\hline
         & \makecell{$\ABedgeha$\\$\ABedgehc$} & \makecell{$\ABedgehb$\\$\ABedgehd$}  & \makecell{$\ABedgehe$\\$\ABedgehg$} & \makecell{$\ABedgehf$\\$\ABedgehh$} & $\ABedgeva$ $\ABedgevc$ & $\ABedgevb$ $\ABedgevd$ & $\ABedgeve$ $\ABedgevg$ & $\ABedgevf$ $\ABedgevh$\\
         \hline
         & \multicolumn{2}{c|}{$\ABedgeIa$} & \multicolumn{2}{c|}{$\ABedgeIb$} &\multicolumn{2}{c|}{$\ABedgeIc$}& \multicolumn{2}{c}{$\ABedgeId$}\\
         \hline
         $A_1^{(0)}$& 2 & 2 & 2 & 2 & 2 & 2 & 2 & 2\\
         \hline
         $A_2^{(0)}$& 0 & 4 & 0 & 4& 0 & 4 & 0 & 4\\
         \hline
         $A_3^{(0)}$& 2 & 2 & 2 & 2 & 0  & 4 & 0 & 4 \\
         \hline
         $A_4^{(0)}$& 0 & 4 & 0 & 4 & 2 & 2 & 2 & 2\\
         \hline\hline
    \end{tabular}
    \caption{Counting different edges in $A^{(0)}_i$ for 16 classes of edges, each specified by the orientation of two adjacent squares. For instance, $A_1^{(0)}$ contains two edges of class $\ABedgeA$ and two edges of class $\ABedgeB$, while $A_2^{(0)}$ contains zero of the former and four of the latter. After one inflation, both edges become $\ABedgeC$, hence the initial difference is inconsequential.}
    \label{tab:ABproof}
\end{table}

    For type-2, let us consider the edges of squares in $A_i^{(0)}$ (each contains 32 edges). 
    These edges belong to different classes based on the orientations of the (two) adjacent squares.
    The key point is that, although the number of edges of a specific type in $A_i^{(0)}$ could depend on $i$, such differences vanish once we go to $A_i^{(1)}$.
    In \reftab{tab:ABproof}, we list the number of different edges in each $A^{0}_i$. We see that, by the time we get to $A_i^{(1)}$, all type-2 vertices have already become legal, and the number of occurrences of each sort of type-2 vertex has already become $i$-independent.
    
    For type-3, we need to go to $A^{(2)}$. After two inflations, all corners become $\ABedgeD$, which is now legal. Hence the lemma also holds for type-3 vertices.    
\end{proof}

\section{Finite\&Discrete Realization}\label{app-finite}

In this section, we describe a QECC constructed on periodic qubit systems, using the Fibonacci symbolic substitutions as in \refsec{sec:discrete}.

\subsection{The construction}
We start from a cyclic bit string $F^{(0)}$ (strings related by translation are considered the same).
Applying the Fibonacci inflation rule $(1,0)\to(10,1)$ $n$ successive times,
we obtain a cyclic bit string $F^{(n)}$.
The length of $F^{(n)}$ is $|F^{(n)}|=k_0f_{n}+k_1f_{n+1}$, where $k_0$ and $k_1$ are the number of 0s and 1s in $F^{(0)}$, and $f_n$ is the $n^{\text{th}}$ Fibonacci number (defined by the recurrence $f_{n}=f_{n-1}+f_{n-2}$ and the initial values $f_0=f_1=1$).

Such finite strings share similar properties as the finite AB tilings in \refsec{sec:AB}. They locally resemble genuine infinite Fibonacci strings and are mutually locally indistinguishable:
\begin{itemize}
    \item any length $f_{n+1}$ substring of $F^{(n)}$ is also a substring of a genuine infinite Fibonacci string and vice versa;
    \item the number of appearances of any length $f_{n+1}$ substring $K$ in $F^{(n)}$ only depends on $k_0, k_1$ and $K$ (in other words, it is independent of the details of $F^{(0)}$ once $k_0$ and $k_1$ are fixed).
\end{itemize}
Moreover, there is also a remnant of local recoverability: 
given the knowledge of $k_0$ and $k_1$, 
an erased contiguous region $K$ of $F^{(n)}$ such that $|K|\leq f_{n}+1$ can be recovered from $F^{(n)}\backslash K$, 
up to a single 01 swapping in $K$, 
which corresponds to\footnote{Note that the $n$-step inflations of 01 and 10 are identical except the last two digits. This is the origin of the singular cases mentioned in \refsec{sec:discrete}} a single 01 swapping in $F^{(0)}$.

Now we can construct 
a QECC as follows. 
We fix $(k_0,k_1,n)$ and pick a set of cyclic strings $\mathcal{F}$, such that (1) each $F^{(0)}\in\mathcal{F}$ contains $k_0$ 0s and $k_1$ 1s;
(2) swapping any adjacent 1 and 0 in any string $F^{(0)}$, the new string is no longer in the set $\mathcal{F}$.
For any $F^{(0)}\in\mathcal{F}$, define the state $\ket{\Psi^{(n)}_{F^{(0)}}}$ as the superposition of all translations of $F^{(n)}$:
\begin{equation}\label{eq:finiteinflation}
    \ket{\Psi^{(n)}_{F^{(0)}}}\propto \sum_{x=1}^{|F^{(n)}|} \ket{x+F^{(n)}}.
\end{equation}

Local indistinguishability and local recoverability (proved below) again guarantee that the above states span the code space ${\cal C}$ of a QECC such that errors in any contiguous region $|K|\leq f_{n}+1$ can be corrected.  

\subsection{Higher Dimension Generalization}
The above construction can be generalized to higher dimensions via a ``Cartesian product" construction. Here, as an example, we describe the 2D case.

For two bit strings $(a_i)$ and $(b_j)$, we define their Cartesian product $a\times b$ as a table $P_{ij}$ of ordered pairs $P_{ij}=(a_{i},b_{j})$; or, equivalently, a table of quaternary bits $T_{ij}$, such that $T_{ij}=2a_i+b_j$.  The above Cartesian product is better understood if we transpose one of the bit strings. For example, 
\begin{equation}
    \begin{pmatrix}
        1\\0
    \end{pmatrix}\times (1,0,1)
    =\begin{pmatrix}
        3&2&3\\
        1&0&1
    \end{pmatrix}.
\end{equation}
Geometrically, such a construction amounts to taking the Cartesian product of two 1D point sets (each living on a circle) to construct a 2D point set (living on a torus). 

Now simply taking the Cartesian product of two wavefunctions in \refeq{eq:finiteinflation}, we get a qudit wavefunction in 2D, where for example
\begin{equation}
    \left\lvert\begin{matrix}
        1\\0
    \end{matrix}\right\rangle\times\ket{101}=\left\lvert\begin{matrix}
        3&2&3\\
        1&0&1
    \end{matrix}\right\rangle.
\end{equation}
Local indistinguishability and local recoverability for 1D bit strings guarantee that such 2D states span a quantum error-correcting code space such that errors in any squares of side length $f_{n}+1$ can be corrected.

\subsection{Proofs}
\subsubsection{Indistinguishability}\label{sec-finiteindis}
    \begin{proposition}
    For a string $K$ of length $f_{n+1}$, its number of occurrences in $F^{(n)}$ only depends on $K$, $k_0$, $k_1$. 
    \end{proposition}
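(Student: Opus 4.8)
The plan is to reduce the claim about occurrences of a length-$f_{n+1}$ substring in the inflated string $F^{(n)}$ to a counting problem about the original string $F^{(0)}$, in exactly the same spirit as the supervertex reduction used in the proof of \cref{claim-AB}. First I would observe that $F^{(n)}$ is obtained from $F^{(0)}$ by replacing each letter with its $n$-fold inflation: each $1$ in $F^{(0)}$ becomes a block of length $f_{n+1}$ (the number of letters in the $n$-step inflation of $1$), and each $0$ becomes a block of length $f_{n}$ (the $n$-step inflation of $0$). Thus $F^{(n)}$ is tiled by ``superletters'': super-$1$'s of length $f_{n+1}$ and super-$0$'s of length $f_{n}$, and these superletters are arranged according to the pattern $F^{(0)}$ itself (since $F^{(0)}$, being a legal Fibonacci string, has exactly one deflation).

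Next I would fix a candidate substring $K$ of length $f_{n+1}$ and ask where it can sit inside $F^{(n)}$. Since $|K| = f_{n+1} = f_{n} + f_{n-1} < 2f_{n}$ for the relevant range (and more carefully $f_{n+1} \le f_{n} + f_{n}$, with the two superletter lengths being $f_{n}$ and $f_{n+1}$), a copy of $K$ can straddle at most a bounded number of consecutive superletters — in fact it must be contained in the concatenation of at most two or three consecutive superletters. So each occurrence of $K$ is ``witnessed'' by a short window of consecutive letters of $F^{(0)}$ (a super-$0$/super-$1$ word of bounded length, say length $\le 3$), together with an offset telling where $K$ starts within that window. The crucial point — just as with the $\tfrac14 m_1 + \tfrac14 m_2 + \tfrac12 m_3 + m_4$ bookkeeping in \cref{claim-AB} — is that each copy of $K$ may be witnessed by several overlapping windows, so I must assign to each (window type, offset) pair a rational weight equal to the reciprocal of the number of windows of that minimal width that contain the copy, so that summing the weighted counts recovers exactly the number of copies of $K$, with no double counting. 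The number of occurrences of each fixed short word (e.g.\ $00$ — forbidden — $01$, $10$, $11$, $010$, $011$, $101$, $110$, \dots) in the cyclic string $F^{(0)}$ is then the only input, and I would invoke the strong local indistinguishability of Fibonacci strings (Appendix~\ref{app-indis}): the frequency of any fixed finite word in $F^{(0)}$ is the same for all legal Fibonacci strings, and for a \emph{cyclic} $F^{(0)}$ with $k_0$ zeros and $k_1$ ones it is determined up to a correction of bounded size that is itself a function of $(k_0,k_1)$ only. Summing the weighted window-counts then gives a formula for the number of occurrences of $K$ depending only on $K$, $k_0$, $k_1$.

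I would finish by noting the one subtlety that makes the statement clean: because $|K| = f_{n+1}$ and the largest superletter also has length $f_{n+1}$, a copy of $K$ contained entirely inside a single super-$1$ is \emph{the whole super-$1$}, so it is witnessed unambiguously, and the potentially delicate boundary cases involve $K$ overlapping a super-$0$ together with a neighbor; checking that the list of possible witnessing windows is finite and that the weights are consistent is a finite verification analogous to \cref{tab:ABproof}. The main obstacle I anticipate is precisely the inclusion-exclusion bookkeeping: correctly enumerating all minimal witnessing windows for a given offset and assigning the right rational weights so that every occurrence of $K$ is counted with total weight exactly $1$ — this is the step where, as in the AB case, one must be careful about superletters of \emph{two different lengths} (unlike the equal-size supersquares there), which makes the set of offsets and window widths slightly more involved. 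Everything else (that $F^{(n)}$ deflates uniquely to $F^{(0)}$, and that word frequencies in cyclic Fibonacci strings depend only on $(k_0,k_1)$) is either immediate from the inflation/deflation structure or already established in Appendix~\ref{app-indis}.
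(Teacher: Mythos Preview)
Your approach has a genuine gap stemming from a misreading of the setup: in this appendix $F^{(0)}$ is an \emph{arbitrary} cyclic bit string with $k_0$ zeros and $k_1$ ones --- it is not required to be a legal Fibonacci word, and in general it cannot be deflated at all. So the parenthetical ``since $F^{(0)}$, being a legal Fibonacci string, has exactly one deflation'' is simply false, and more importantly your final step collapses: Appendix~\ref{app-indis} tells you nothing about word frequencies in an arbitrary $F^{(0)}$. Concretely, the cyclic strings $0011$ and $0101$ both have $(k_0,k_1)=(2,2)$, but the first contains one copy of $00$ and the second contains none. Your weighted-window expansion genuinely involves counts of length-$2$ and length-$3$ words in $F^{(0)}$, and those counts are \emph{not} functions of $(k_0,k_1)$ alone.

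The proposition is nonetheless true, so what must happen is that the coefficients in your weighted sum conspire so that the contributions from multi-letter word counts cancel, leaving only linear combinations of $k_0$ and $k_1$. That cancellation is the actual content of the proof, and you have not supplied it; invoking indistinguishability is a placeholder exactly where an argument is needed. The paper avoids this bookkeeping entirely by inducting on $n$: it defines a ``parent string'' $\mathcal{D}(K)$ via a single deflation step, shows that the multiplicity of $K$ in $F^{(n)}$ equals the multiplicity of $\mathcal{D}(K)$ in $F^{(n-1)}$, and proves the key lemma that $|K|=f_{n+1}$ forces $|\mathcal{D}(K)|\le f_n$. Iterating $n$ times brings you down to counting \emph{single digits} in $F^{(0)}$, which really are just $k_0$ and $k_1$. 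Your direct jump from level $n$ to level $0$ is precisely what forces the undetermined multi-letter counts to appear; descending one level at a time is what makes them never show up.
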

    This claim is parallel to Prop.\ref{claim-AB} and a similar proof also works here. 
    Here, we give an alternative proof that gives a better constant, and also motivates the recovery algorithm in appendix \ref{app-recover-finite}.
    \begin{proof}
    We prove the claim by induction. 
    The case of $n=0$ is obvious: $f_{0+1}=1$, so we only consider single digits. The number of appearance of $K=0$ (or $K=1$) in $F^{(0)}$ is exactly $k_0$ (or $k_1$) by definition.

    For $n\geq 1$, we can relate the number of appearances of $K$ in $F^{(n)}$ to the number of appearance of another substring in $F^{(n-1)}$.
    More precisely, we define its ``parent string" $\mathcal{D}(K)$, 
    so that  for any $F^{(0)}$ the multiplicity of $K$ in $F^{(n)}$ equals the multiplicity of $\mathcal{D}(K)$ in $F^{(n-1)}$:
    \begin{equation}\label{eq-samemulti}
    \#K\text{~in~} F^{(n)}=\#\mathcal{D}(K) \text{~in~} F^{(n-1)}~~(\forall F^{(0)}).
    \end{equation}
    $\mathcal{D}(K)$ is solely determined by $K$.
    
    To define $\mathcal{D}(K)$, we just need to perform one deflation step. 
    Namely, we apply the following procedure to $K$:
    \begin{enumerate}
        \item insert a virtual cut $\urcorner\ulcorner$ to the left of each 1;
        \item inside the string: replace each $\ulcorner 10\urcorner$ by a $\ulcorner 1\urcorner$, replace each $\ulcorner 1\urcorner$ by a $\ulcorner 0\urcorner$;
        \item at the two endpoints: replace $0\urcorner$ by $1\urcorner$ if $K$ starts with 0 (because there must be a 1 to the left of any 0, which together deflates to 1); delete $\ulcorner 1$ at the end if $K$ ends with 1 (since the inflation of both 1 and 0 starts with at 1, so there is no constraint on this digit in $F^{(n-1)}$).
    \end{enumerate}
    The above procedure can always be applied for any string $K$ that could appear as a substring of a $n$-step ($n\geq 1$) inflation of some string.
    For example, for $K=010, 110, 011, 101$ (these are all possible 3 digit strings for $n\geq 2$), the corresponding $\mathcal{D}(K)=11, 01, 10, 1$. 
    It is evident by construction that the above defined $\mathcal{D}(K)$ satisfies \refeq{eq-samemulti}.

    From \refeq{eq-samemulti}, the desired property for string $K$ at level $n$ is equivalent to the same property for its parent string $\mathcal{D}(K)$ at level $(n-1)$.
    Therefore, the induction is concluded by the following lemma.
    \end{proof}
    \begin{lemma}
    For any string $K$ of length $f_{n+1}$ that could appear in $F^{(n)}$ (here $n\geq 1$), its parent string satisfies $|\mathcal{D}(K)|\leq f_{n}$.
    \end{lemma}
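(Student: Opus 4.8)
\emph{The plan.} Write $\sigma$ for the substitution $1\mapsto 10$, $0\mapsto 1$, let $\varphi=\tfrac12(1+\sqrt5)$, and for a word $w$ let $n_1(w)$ be its number of $1$'s, so that $|\sigma(w)|=|w|+n_1(w)$. Since $K$ occurs in $F^{(n)}=\sigma^n(F^{(0)})$, it is a factor of the Fibonacci word $\mathbf f$ (the fixed point $\sigma(\mathbf f)=\mathbf f$), and so is its parent $\mathcal D(K)$, a factor of $\sigma(\mathbf f)=\mathbf f$. I would prove the bound in two moves. First, unwinding one deflation step block by block, I would establish the \emph{exact} length identity $|\sigma(\mathcal D(K))|=|K|+s-e$, where $s=1$ iff $K$ begins with $0$ and $e=1$ iff $K$ ends with $1$ (otherwise $0$); I would also note that in the single case $s=1,\,e=0$ the word $\mathcal D(K)$ both begins and ends with $1$. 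Second, I would combine this with the balance property of $\mathbf f$ --- every factor of length $\ell$ has $\lfloor\ell/\varphi\rfloor$ or $\lceil\ell/\varphi\rceil$ ones --- and the elementary estimate $f_m/\varphi=f_{m-1}+(-1)^m\varphi^{-(m+1)}$, which gives $\lfloor f_m/\varphi\rfloor\in\{f_{m-1}-1,f_{m-1}\}$ and $\lfloor(f_m+1)/\varphi\rfloor=f_{m-1}$ for $m\ge1$.

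\emph{Carrying it out.} For Step 1: occurring inside $\sigma(F^{(n-1)})$, $K$ covers a run of complete $\sigma$-images of single letters (each image begins with $1$), possibly flanked by a partial image; a left-partial image can only be the lone trailing $0$ of a $\sigma(1)=10$ block, which makes $\mathcal D(K)$ pick up an extra leading $1$, while the parent of the last image must be dropped exactly when $K$ ends with $1$ --- tallying these two endpoint effects gives the identity and the ``begins and ends with $1$'' observation. For Step 2, set $\ell=|\mathcal D(K)|$ and $m=n_1(\mathcal D(K))$, so $\ell+m=|\sigma(\mathcal D(K))|=f_{n+1}+s-e$. If $s-e\le 0$ then $\ell+\lfloor\ell/\varphi\rfloor\le f_{n+1}$; since $\ell\mapsto\ell+\lfloor\ell/\varphi\rfloor$ is nondecreasing and already equals $f_{n+1}+1$ at $\ell=f_n+1$, this forces $\ell\le f_n$. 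In the remaining case $s=1,\,e=0$ we have $\ell+m=f_{n+1}+1$ with $\mathcal D(K)$ beginning and ending with $1$; its $m-1$ inter-$1$ gaps are each $1$ or $2$ and, read off $\mathbf f$, form a length-$(m-1)$ factor of the self-similar gap word ($\mathbf f$ relabelled by $1\mapsto2$, $0\mapsto1$), so at most $\lceil(m-1)/\varphi\rceil$ of them are $2$ and hence $\ell-1=\sum(\text{gaps})\le (m-1)+\lceil(m-1)/\varphi\rceil$; if $\ell\ge f_n+1$ then $m-1\le f_{n-1}-1$, so $\ell-1\le(f_{n-1}-1)+f_{n-2}=f_n-1$ (using $\lceil(f_{n-1}-1)/\varphi\rceil\le f_{n-2}$), a contradiction, and again $\ell\le f_n$. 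The degenerate case $n=1$ (where $s=1,e=0$ cannot occur, as $K=00$ is illegal) is trivial, and the finitely many small-$m$ boundary checks are immediate.

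\emph{The hard part.} The inequality $|\mathcal D(K)|\le f_n$ is tight --- equality is attained, e.g.\ by $\mathcal D(01011010)=11011$ at $n=4$ --- so the naive density bound $n_1\ge\lfloor\ell/\varphi\rfloor$ leaves precisely one unit of slack, and exactly in the awkward case where $K$ begins and ends with $0$. Closing that one unit is the real content: it is why one must keep the \emph{exact} identity of Step 1 rather than a one-sided inequality, and why in the extremal case one must use that the spacings between consecutive $1$'s of $\mathbf f$ are themselves balanced (equivalently, the Beatty description $p_i=\lfloor i\varphi\rfloor$ of the positions of the $1$'s in $\mathbf f$). Everything else is routine bookkeeping.
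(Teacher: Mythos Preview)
Your argument is correct and takes a genuinely different route from the paper's. The paper proceeds structurally: it asserts that any length-$f_{n+1}$ window in $F^{(n)}$ already sits, as a string, inside one of the four supercell patterns $Q_{n+1}Q_{n+1}$, $Q_{n+1}Q_n$, $Q_nQ_{n+1}$, $Q_{n+1}Q_nQ_{n+1}$, and then disposes of each case by a cyclic-rotation trick (writing $K=ab$ with $ba=Q_{n+1}$, so that deflating $K$ is ``the same'' as deflating $Q_{n+1}$) together with the near-palindrome relation between $Q_{n+1}Q_n$ and $Q_nQ_{n+1}$. You instead work arithmetically: you extract the exact identity $|\sigma(\mathcal D(K))|=|K|+s-e$ from the endpoint bookkeeping of the deflation rule, and then close the one remaining unit of slack using the Sturmian balance property of $\mathbf f$ --- invoking, in the extremal $s=1,e=0$ case, the self-similarity of the inter-$1$ gap sequence. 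Your version explains transparently \emph{why} the bound is exactly $f_n$ (it is forced by the density $1/\varphi$ of $1$'s) and would port to other Sturmian substitutions with only cosmetic changes; the paper's version is more self-contained, needing no appeal to balance or Beatty sequences, and stays entirely inside the inflation/deflation formalism already set up.

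One point deserves a sentence of justification. You assert that $K$, hence $\mathcal D(K)$, is a factor of $\mathbf f$; but in this appendix the seed $F^{(0)}$ is an \emph{arbitrary} cyclic string, so $F^{(n)}$ may contain, say, $Q_nQ_nQ_n=\sigma^{n-1}(111)$, which is \emph{not} a factor of $\mathbf f$. The claim is nonetheless true for windows of length $\le f_{n+1}$: the identity $Q_nQ_n=Q_{n+1}Q_{n-2}$ (from $Q_{n+1}=Q_nQ_{n-1}$ and $Q_n=Q_{n-1}Q_{n-2}$) shows that any such window is already a factor of $Q_nQ_{n+1}$ or one of the other three listed patterns, hence of $\mathbf f$. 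This is precisely the (also unjustified) opening line of the paper's proof, so both arguments share this small lacuna; one line with $Q_nQ_n=Q_{n+1}Q_{n-2}$ fills it.
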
 
 \begin{figure}
    \centering
    \includegraphics[width=\columnwidth]{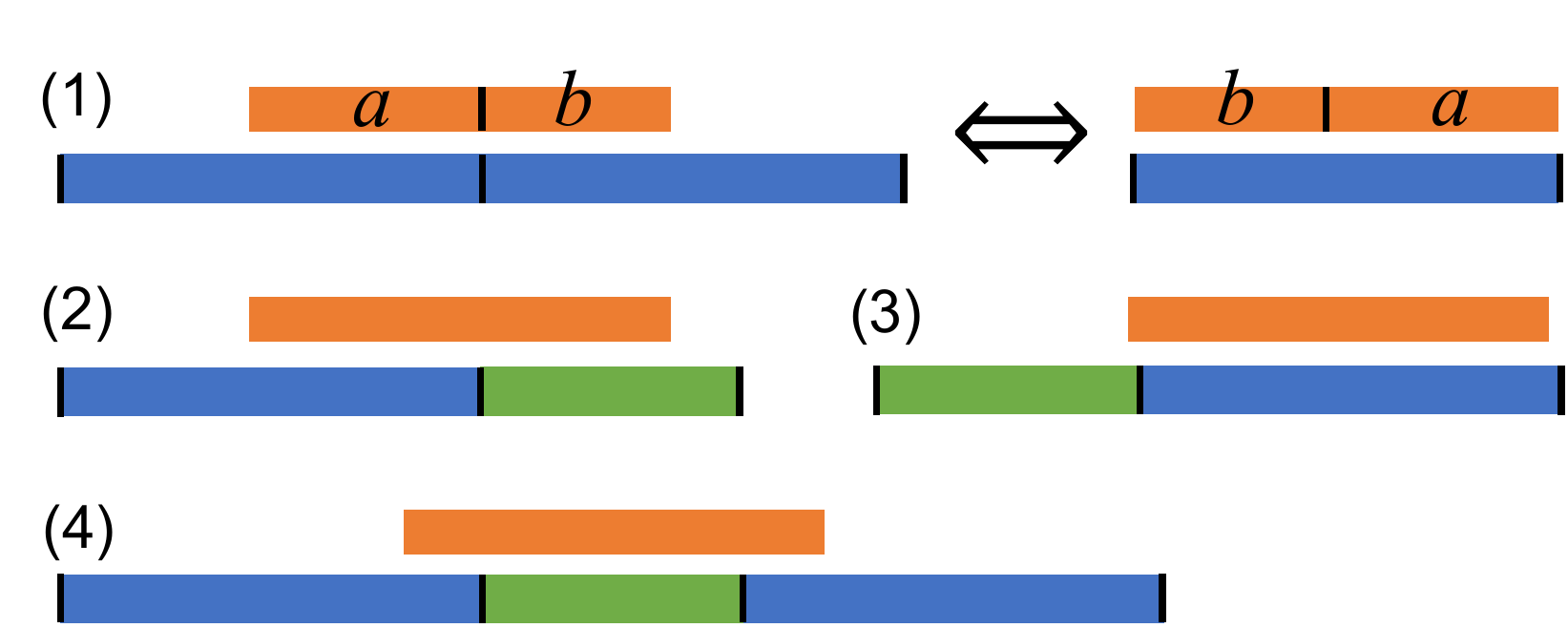}
    \caption{Proof of $|\mathcal{D}(K)|\leq f_n$ when $|K|=f_{n+1}$. The blue\allowbreak/green\allowbreak/orange intervals denote $Q_{n+1}$, $Q_n$, $K$ respectively. $|K|=|Q_{n+1}|=f_{n+1}$, $|Q_{n}|=f_{n}$. Depending on the position of $K$ in $F^{(n)}$, there are four cases.}
    \label{fig:move}
\end{figure}
\begin{proof}
    We denote the $n^{\text{th}}$ inflation of 0 as $Q_{n}$, which is also the $(n-1)^{\text{th}}$ inflation of 1.
    $|Q_n|=f_n$.
    By considering the length, we know $|K|=f_{n+1}$ implies that $K\subset Q_{n+1}Q_{n+1}$ or $Q_{n+1}Q_{n}$ or $Q_{n}Q_{n+1}$ or $Q_{n+1}Q_nQ_{n+1}$.
    Recall that to find $\mathcal{D}(K)$ from $K$, one just needs one step of deflation, which is a local procedure.
    \begin{enumerate}
        \item Case $Q_{n+1}Q_{n+1}$. Following the notation in \reffig{fig:move}, we decompose $K$ as $ab$.
        If we move $b$ to the left of $a$, then $ba=Q_{n+1}$.
        Deflation is locally well-defined, hence performing deflation on $ab$ is equivalent to that on $Q_{n+1}$\footnote{More precisely: 
        if $a$ starts with 1 and $b$ ends with 0, or if $a$ starts with 0(which implies that $b$ ends with 1),      
        then $\mathcal{D}(ab)$ and $Q_n$ are related by the same move; 
        if $a$ starts with 1 and $b$ ends with 1, 
        then $\mathcal{D}(ab)$ is 1 digit less that $Q_n$.
        Also note that $\mathcal{D}(Q_{n+1})$ does not always equals $Q_n$. For example, when $n=2$, $Q_{n+1}=101$, $Q_n=10$, but $\mathcal{D}(K)=1.$}.
        So $|\mathcal{D}(K)|\leq |Q_n|=f_n$.
        
        \item Case $Q_{n+1}Q_n$. 
        Note that $Q_n$ can be regarded as the initial $f_n$ digits of $Q_{n+1}$, so the problem is reduced to the case $Q_{n+1}Q_{n+1}$.
        
        \item Case $Q_nQ_{n+1}$.
        Note that $Q_nQ_{n+1}$ is exactly the same as $Q_{n+1}Q_n$ except the last two digits (where 0 and 1 are swapped).
        If $K$ does not include any of the two digits, or if $K$ includes both digits, then we are in the case of $Q_{n+1}Q_n$.
        If $K$ only includes one of the two digits, then the start and end of $K$ are the same. One can check that, regardless of whether it is 0 or 1, we always have $|\mathcal{D}(K)|=|Q_n|$.
        \item Case $Q_{n+1}Q_nQ_{n+1}$.
        Note that the first $(f_{n+1}-1)$ digits of $Q_nQ_{n+1}$ are the same as the first $(f_{n+1}-1)$ digits of $Q_{n+1}$, so the problem is reduced to the case of $Q_{n+1}Q_{n+1}$.
    \end{enumerate}
\end{proof}

\subsubsection{Recoverability}\label{app-recover-finite}

\begin{proposition}\label{prop:recoverfinite}
If $k_0$, $k_1$ and $n$ are known, a string $F^{(n)}$ can be recovered after erasing a contiguous region $K\subset F^{(n)}$ of length $|K|=f_{n}+1$,  
up to a single 01 swap in $K$, 
corresponding to a single 01 swap in $F^{(0)}$.
\end{proposition}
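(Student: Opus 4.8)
The plan is to use the deflation algorithm alluded to earlier in Appendix \ref{sec-recover-1D-infinite}, adapted to the finite cyclic setting, and to track carefully how the size of the ``hole'' shrinks relative to the size of the supertiles after each deflation step. Suppose we are given $F^{(n)}$ with a contiguous block $K$ of length $|K|=f_n+1$ erased; we know $k_0$, $k_1$, $n$, and hence $|F^{(n)}|=k_0 f_n + k_1 f_{n+1}$. The first step is to observe that outside a bounded neighborhood of $K$ — say everywhere at distance at least a couple of tiles from the two endpoints of $K$ — the string looks like a genuine inflated Fibonacci string, so the deflation map $\mathcal{D}$ (insert a virtual cut $\urcorner\ulcorner$ to the left of each $1$, then $\ulcorner 10 \urcorner \to \ulcorner 1 \urcorner$ and $\ulcorner 1 \urcorner \to \ulcorner 0 \urcorner$) can be applied unambiguously there. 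The cut structure near the two ends of the known region is determined by the requirement that the whole cyclic string be an $n$-fold inflation: cuts occur only to the left of $1$'s, supertiles are either $Q_1 = 10$ or $Q_0 = 1$ blocks, so the position of the cut nearest each endpoint of $K$ is forced by the neighboring known digits (up to the ambiguity that a $0$ at the very boundary might belong to a supertile straddling into $K$).

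Second, I would show the hole shrinks appropriately under one deflation. After deflating, $F^{(n)}$ becomes $F^{(n-1)}$ (a cyclic string with the same $k_0$, $k_1$), and the image of the erased-and-unrepairable region is a contiguous block of length at most $f_{n-1}+1$. This is exactly the content of the Lemma proved just above for indistinguishability, $|\mathcal{D}(K)| \le f_n$ when $|K| = f_{n+1}$, applied one level down: a contiguous unknown block of length $f_n + 1 \le f_n + 1$ within (a cyclic window of) an $n$-fold inflation deflates to an unknown block of length $\le f_{n-1}+1$. (The boundary bookkeeping — a single digit of slop at each end coming from supertiles straddling the boundary of $K$ — is absorbed into the ``$+1$''.) Iterating, after $n$ deflation steps the unknown region has length at most $f_0 + 1 = 2$, sitting inside the cyclic string $F^{(0)}$, whose total digit counts $k_0, k_1$ we know.

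Third, the base case: we have $F^{(0)}$ with an unknown contiguous block of length $\le 2$, and we know $k_0$ and $k_1$. If the block has length $\le 1$, the single missing digit is determined by the parity/count constraint (we know how many $1$'s total). If the block has length $2$ and the two known neighbors plus the count constraint do not pin both digits, the only residual ambiguity is between the two orderings $01$ and $10$ of a fixed pair — a single $01$ swap — because the multiset of the two digits is fixed by $k_0, k_1$ minus what is visible. Then re-inflating $n$ times reconstructs $F^{(n)}$, and by the footnote remark that $n$-step inflations of $01$ and $10$ agree except in their last two digits, this residual swap in $F^{(0)}$ propagates to a single $01$ swap localized inside $K$ in $F^{(n)}$, which is exactly the claimed ambiguity.

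The main obstacle I expect is the boundary bookkeeping in the inductive step: making precise that the deflation operation is well-defined on the \emph{known} part of the cyclic string right up to the edge of $K$, that the supertile decomposition near $\partial K$ is forced (rather than merely consistent) given $k_0, k_1, n$, and that the propagated unknown region is genuinely contiguous and of length $\le f_{n-1}+1$ at every step — in particular that no new ambiguity is introduced at the two seams where the repaired region meets the known region. Controlling these $O(1)$ edge corrections uniformly in $n$, so the $f_n + 1$ bound is exactly preserved under the induction, is the delicate part; everything else (the count constraint at the base, the inflation/deflation compatibility, the $01$-swap tracking) is routine given the lemmas already established.
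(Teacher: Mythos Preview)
Your iterated-deflation strategy is a genuinely different route from the paper's. The paper does not track the hole through successive levels; instead it proves a standalone lemma (Lemma~\ref{lemma-singledigit}: $f_{n+2}-1$ consecutive known digits of $F^{(n)}$, anchored at the left endpoint of a supercell, suffice to determine the corresponding single digit of $F^{(0)}$), applies it to recover all but at most four digits of $F^{(0)}$ directly from $F^{(n)}\setminus K$, and then whittles the remaining ambiguity down to two digits by a short case analysis using the position of $K$ relative to the supercell boundaries. This one-shot jump from level $n$ to level $0$ sidesteps entirely the iterated boundary bookkeeping you flag as delicate.

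Your inductive step, however, has a real gap. The lemma you invoke, $|\mathcal{D}(K)|\le f_n$ for $|K|=f_{n+1}$, has its index $n$ tied to the inflation level (its proof uses that $F^{(n)}$ is a concatenation of $Q_n$ and $Q_{n+1}$ supercells), so ``applying it one level down'' does not produce the one-step bound you actually need, namely that a hole of length $f_m+1$ in $F^{(m)}$ leaves at most $f_{m-1}+1$ undetermined digits in $F^{(m-1)}$. The naive count already fails at $m=2$: take $F^{(0)}=110$ (cyclic), so $F^{(2)}=10110110$, and erase positions $4$--$6$. The supertile-ambiguous window is positions $3$--$6$, containing three supertiles, so three digits of $F^{(1)}$ are a priori unknown, exceeding $f_1+1=2$. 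This particular instance can be rescued by also feeding in the global length $|F^{(1)}|$ together with the no-$00$ constraint on $F^{(1)}$, which force the pattern $1|10|1$ and determine all three digits --- but you invoke the count constraint only at the base case, not at every level. Making your induction close would require proving that the \emph{effective} hole (after using the global length and the local Fibonacci constraints at each deflation step) obeys the $f_m+1\to f_{m-1}+1$ shrinkage; that is a new lemma, not a corollary of the one you cite.
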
   
    This claim is a finite version of the one in \refsec{sec-recover-1D-infinite}. 
    Here, we give an algorithmic proof, which complements the existence proof there.
    The main idea is to perform local deflation as much as possible, trying to recover digits in the ``initial layer" $F^{(0)}$. 
    \begin{lemma}[single digit recovery by deflation]\label{lemma-singledigit}
        To recover a digit in $F^{(0)}$ by deflation, it is enough to know $f_{n+2}-1$ digits starting from the left endpoint of its descendant substring (supercell) in $F^{(n)}$.
    \end{lemma}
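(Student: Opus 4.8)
The plan is to track how a single digit of $F^{(0)}$ expands under $n$ successive inflations and determine exactly which digits of $F^{(n)}$ are needed to reconstruct it by iterated deflation. Recall that under inflation a digit $b \in \{0,1\}$ of $F^{(k)}$ produces a block (its supercell in $F^{(k+1)}$) which is $Q_1 = 10$ if $b=1$ and $Q_0 = 1$ if $b=0$; more generally the $m$-step supercell of a $1$ is $Q_m$ and that of a $0$ is $Q_{m-1}$, with $|Q_m| = f_m$. The deflation procedure described above \refeq{eq-samemulti} is local, but it is not quite symmetric at the two ends: to decide how a given supercell deflates, one may need to see a little past its right endpoint (because a trailing $1$ could be the start of the next parent block, and because a supercell beginning with $0$ must be preceded by a $1$). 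So the first step is to make precise, from the three-step procedure defining $\mathcal{D}$, exactly how much ``look-ahead'' past the right end of an $m$-step supercell is required to determine its $(m{-}1)$-step deflation unambiguously: I expect the answer to be ``the supercell itself plus the first digit of the following supercell,'' i.e.\ at most one extra digit at the right at each level, and no extra digit needed at the left once we agree to read left-to-right from the left endpoint.

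Second, I would set up the induction on $n$. For $n=0$ the supercell is the single digit itself and $f_{2}-1 = 1$, so knowing $1$ digit (starting at the left endpoint) trivially suffices. For the inductive step, suppose the digit $d$ of $F^{(0)}$ has $(n{-}1)$-step supercell $W \subset F^{(n-1)}$, recoverable from the $f_{n+1}-1$ digits of $F^{(n-1)}$ beginning at the left endpoint of $W$. Each of those $f_{n+1}-1$ digits of $F^{(n-1)}$ is itself a digit whose $1$-step supercell sits in $F^{(n)}$; to deflate $F^{(n)}\to F^{(n-1)}$ on the relevant stretch we need those supercells plus one look-ahead digit. The total number of digits of $F^{(n)}$ consumed is at most the combined length of the $1$-step supercells of $f_{n+1}-1$ consecutive digits of $F^{(n-1)}$, plus $1$. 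Here I use the elementary fact that \emph{any} window of $t$ consecutive digits of a Fibonacci string inflates to a block of length at most $f_{?}$ — concretely, $t$ consecutive digits contain at most $\lceil \text{something}\rceil$ ones, and a careful count (exactly the kind of reasoning used in \reffig{fig:move} and the lemma $|\mathcal D(K)|\le f_n$ for $|K|=f_{n+1}$, run in the forward direction) shows the inflation of $f_{n+1}-1$ consecutive digits has length at most $f_{n+2}-2$. Adding the single look-ahead digit gives $f_{n+2}-1$, closing the induction.

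Third, I would double-check the boundary bookkeeping: the claim specifies ``starting from the left endpoint of the supercell,'' which matters because the asymmetry of $\mathcal{D}$ lives at the \emph{right} end, and anchoring at the left endpoint is precisely what lets the single extra digit be a look-ahead rather than a look-behind at every level. I would verify on the small cases ($n=1,2$, using $Q_2 = 101$, $Q_3 = 10110$, etc.) that the bound $f_{n+2}-1$ is actually attained, which both confirms tightness and catches any off-by-one error in the length estimate. The main obstacle I anticipate is exactly this constant-tracking: proving the sharp bound ``$f_{n+1}-1$ consecutive digits inflate to $\le f_{n+2}-2$ digits'' requires the same slightly fiddly case analysis on where the window falls relative to the $Q_{n+1}$/$Q_n$ block structure that appears in the $|\mathcal D(K)|\le f_n$ lemma — the geometry is routine but the endpoints must be handled with care, and getting a clean invariant that survives $n$ iterations (rather than losing a digit per level) is where the real work is.
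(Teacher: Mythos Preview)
Your induction runs in the opposite direction from the paper's, and that choice is where the difficulty enters. You peel off one deflation at the top ($F^{(n)}\to F^{(n-1)}$) and then invoke the hypothesis for level $n-1$; the paper instead peels off one step at the \emph{bottom} ($F^{(0)}\to F^{(1)}$) and applies the hypothesis to the $(n-1)$-step map $F^{(1)}\to F^{(n)}$. The bottom-up direction is clean because of a fact you never use: the first digit of the $1$-step supercell of \emph{any} digit is always $1$. Hence its $(n-1)$-step descendant in $F^{(n)}$ has length exactly $f_n$, and the second $F^{(1)}$-digit then costs $f_{n+1}-1$ by induction; the identity $f_n+(f_{n+1}-1)=f_{n+2}-1$ finishes the proof with no case analysis at all.

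Your route, by contrast, rests on the assertion that ``$f_{n+1}-1$ consecutive digits inflate to at most $f_{n+2}-2$ digits,'' and this is false in the generality you need. Take $n=2$ with target digit $0$: its supercell in $F^{(1)}$ is ``$1$'', and the next $F^{(1)}$-digit is necessarily ``$1$'' (every supercell begins with $1$), so the two digits to be recovered are ``$11$'', whose one-step inflation ``$1010$'' has length $4>f_4-2=3$. The lemma still holds here (four digits of $F^{(2)}$ do suffice), but only because your ``$+1$ look-ahead'' overcounts by one when the last $F^{(n-1)}$-digit is $1$; the two off-by-one errors happen to cancel. Turning that cancellation into a uniform argument is precisely the ``fiddly'' part you flag, and it is not clear it closes without effectively rediscovering the paper's observation. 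Note also that $F^{(n-1)}$ is the $(n{-}1)$-step inflation of an \emph{arbitrary} $F^{(0)}$, so for small $n$ it need not obey Fibonacci density constraints (e.g.\ $F^{(1)}$ can contain arbitrarily long runs of $1$'s), undermining any appeal to generic Fibonacci combinatorics.
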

    \begin{proof}[Proof of lemma \ref{lemma-singledigit}]
     By induction. The case of $n=1$ is obvious: $f_{n+2}-1=2$, and indeed knowing 2 digits starting from the left endpoint of its inflation is enough to determine the parent digit. 
     For larger $n$, by the above $n=1$ case, to determine a single digit $F^{(0)}$, it is enough if we have determined 2 digits in $F^{(1)}$ starting from the left endpoint of its inflation.
     A sufficient condition is to know the $f_n$ digits in $F^{(n)}$ corresponding to the descendant of the first digit (must be 1), as well as $f_{n+1}-1$ digits (by induction) to determine the second digit.
     Note that $f_{n}+f_{n+1}-1=f_{n+2}-1$, so we are done.
    \end{proof}

\begin{proof}[Proof of Prop.\ref{prop:recoverfinite}]
   
    Let $a_{i}$ denote the $i$th digit in $F^{(0)}$, so $F^{(0)}=[\cdots a_0a_1a_2\cdots]$, and let $P_i$ denote the descendant\footnote{Someone reconstructing $F^{(0)}$ does not know the locations of the $P_i$ {\it a priori}, but learns their locations via the recovery process.}
    of $a_i$ in $F^{(n)}$. 
    $|P_i|=f_n$ if $a_i=0$ and $|P_i|=f_{n+1}$ if $a_i=1$.  
    The assumption $|K|\leq f_n+1$ implies that $K$ at most intersects two pieces, say $K\subset P_1\cup P_2$. 
    Since $3f_n\geq f_{n+2}-1$, Lemma \ref{lemma-singledigit} implies that all digits can be determined by deflation except $a_{-1}, a_0, a_1, a_2$. 
    Moreover, the relative position of $K$ in $P_{-1}\cup P_0\cup P_{1}\cup P_2$ can also be determined. 
    
    If $K$ does not include the rightmost digit of $P_2$, then $a_2$ can be determined (simply by looking at the last digit of $P_2$), and we are left with $a_{-1}, a_0, a_1$ and $K\backslash P_2\subset P_1$. \
    If $K$ includes the rightmost digit of $P_2$, then since $f_{n+2}-1+f_{n}+1\leq 4f_n$ for $n\geq 1$, lemma \ref{lemma-singledigit} implies that $a_{-1}$ can also be determined by deflation.
    Therefore, in any case, we are always left with at most three unknown digits (we redefine them as $a_0, a_1, a_2$) and an unknown region $K\subset P_1P_2$, $|K|\leq f_n+1$, where $K$ includes the rightmost digit of $P_2$. 

    If $a_{0}=1$, then $|P_0|+|P_1|+|P_2|\geq f_{n+1}+2f_{n}=f_{n+2}-1+f_{n}+1$, hence lemma \ref{lemma-singledigit} implies that we can actually determine $a_{0}$ by deflation.
    Therefore, $a_{0}$ can be determined in any case: either directly by deflation, or if this simple method fails, it must be 0.

    Now we are left with only two unknown digits $a_2, a_3$ in $F^{(0)}$. 
    With the knowledge of $k_0$ and $k_1$, we can determine the set $\{a_2,a_3\}$.
    If $\{a_2,a_3\}=\{0\}$ or $\{1\}$, we are done.
    If $\{a_2,a_3\}=\{0,1\}$, then we have an ambiguity of a single 01 swap.
\end{proof}
   
\section{Entanglement Entropy}
\label{app-entropy}

It is standard in condensed matter physics and high energy physics to consider the entanglement entropy of many-body systems, defined as the von Neumann entropy of the reduced density matrix of subsystems.
In this Appendix, we comment on the entanglement entropy of the states in the code space of our QECCs, and their relation to the complexity function.

We take the discrete wavefunction \refeq{eq:1Dwavefunction} as an example.  (Calculating the entanglement entropy for systems with continuum degrees of freedom generally requires a cut-off, or regularization.)
For a subregion $K$ of length $n$, the reduced state $\rho_{K}$ is a classical mixture of possible substrings, where the coefficients are frequencies of appearance (similar to \refeq{eq:TT}). 
Define the \emph{complexity function} $p(n)$ as the number of possible substrings of length $n$, then the entropy of $\rho_{K}$, which equals the entanglement entropy between $K$ and $K^c$, is bounded by:
\begin{equation}
    S(n)\leq \log p(n).
\end{equation}
It turns out that $p(n)=n+1$ for 1D Fibonacci quasicrystals \cite{baake2013aperiodic}, hence
\begin{equation}
    S(n)=O(\log n).
\end{equation}

Actually, the substring frequencies of 1D Fibonacci quasicrystals can be calculated exactly.
Based on an observation in \refcite{lothaire2002algebraic}, we can inductively (full proof omitted) prove that 
\begin{proposition}
    If $n\in[f_{k-1},f_k-1]$, then among the $(n+1)$ possible substrings, $(n-f_{k-1}+1)$ of them have frequency $\tau^{k}$, $(n-f_{k-2}+1)$ have frequency $\tau^{k-1}$, and $(f_k-n-1)$ have frequency $\tau^{k-2}$ (here $\tau=\frac{\sqrt{5}-1}{2}$).
\end{proposition}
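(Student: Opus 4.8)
The plan is to prove the proposition by induction on the substring length $n$, tracking the full multiset $\mu_n=\{\nu(w):|w|=n\}$ of factor frequencies rather than individual factors. I take two facts for granted. First, $p(n)=n+1$ (already quoted from \cite{baake2013aperiodic}): this says the Fibonacci word is Sturmian, so for each $n$ there is a \emph{unique} right-special factor $w_n^{\ast}$ --- the only length-$n$ factor that is followed, somewhere in the word, by both $0$ and $1$. Second, the frequencies $\nu(\cdot)$ are well defined and shift-invariant (as used throughout the paper), so that $\sum_a\nu(wa)=\nu(w)$ for every factor $w$. Combining these, going from length $n$ to length $n+1$ alters $\mu_n$ in exactly one way: the value $\nu(w_n^{\ast})$ is removed and replaced by the two positive numbers $\nu(w_n^{\ast}0)$ and $\nu(w_n^{\ast}1)$, which sum to $\nu(w_n^{\ast})$; every other factor has a unique right extension and keeps its frequency. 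The whole problem therefore reduces to identifying, at each step, \emph{which} frequency splits and \emph{into what}.

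The single Sturmian-specific ingredient --- the ``observation in \cite{lothaire2002algebraic}'' --- I would import via the cut-and-project / rotation picture: length-$n$ factors are the cylinders for the coding of the rotation $x\mapsto x+\tau$ on $\mathbb R/\mathbb Z$ by $\{[0,1-\tau),[1-\tau,1)\}$, and $\nu(w)$ equals the length of the corresponding arc. Then $\mu_n$ is exactly the multiset of gap lengths of the $n+1$ points $\{-j\tau\modone:0\le j\le n\}$, and the passage $n\to n+1$ amounts to inserting the one new point $-(n+1)\tau$, which lands in a single arc and bisects it. Feeding in the three-gap (three-distance) theorem, together with the facts that the continued-fraction denominators of $\tau$ are the Fibonacci numbers and that $\|f_k\tau\|=\tau^{k+1}$ for $k\ge1$, one reads off three statements: for $f_{k-1}\le n\le f_k-1$ the arc lengths lie in $\{\tau^k,\tau^{k-1},\tau^{k-2}\}$; the inserted point always bisects an arc of \emph{maximal} length; and a maximal arc of length $\tau^m$ is always cut into arcs of lengths $\tau^{m+1}$ and $\tau^{m+2}$ (using $\tau^2+\tau-1=0$, i.e.\ $\tau^m=\tau^{m+1}+\tau^{m+2}$). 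These are precisely what the induction consumes.

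The induction then proceeds as follows. Base case $n=1$: $\mu_1=\{\tau,\tau^2\}$ by \refeq{eq-frequency1}, which is the proposition at $k=2$. For the inductive step I split according to the position of $n$ inside its Fibonacci interval. If $f_{k-1}\le n\le f_k-2$, the hypothesis gives multiplicities $(n-f_{k-1}+1,\,n-f_{k-2}+1,\,f_k-n-1)$ on $(\tau^k,\tau^{k-1},\tau^{k-2})$, whose largest entry is $\tau^{k-2}$ (present, since $f_k-n-1\ge1$); removing one $\tau^{k-2}$ and inserting one $\tau^{k-1}$ and one $\tau^k$ gives exactly the claimed $\mu_{n+1}$, still at index $k$. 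If $n=f_k-1$, the hypothesis gives multiplicities $(f_{k-2},\,f_{k-1},\,0)$ on $(\tau^k,\tau^{k-1},\tau^{k-2})$, so only $\tau^k$ and $\tau^{k-1}$ occur and the maximum is $\tau^{k-1}$; removing one $\tau^{k-1}$ and inserting one $\tau^k$ and one $\tau^{k+1}$ yields multiplicities $(1,\,f_{k-2}+1,\,f_{k-1}-1)$ on $(\tau^{k+1},\tau^k,\tau^{k-1})$, which is the proposition for $n+1=f_k$ at index $k+1$. In both cases the multiplicities sum to $n+1$ by $f_{k-1}+f_{k-2}=f_k$, consistent with $p(n+1)=n+2$.

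The step I expect to be the actual work --- and presumably the reason the paper writes ``full proof omitted'' --- is the middle paragraph: establishing, uniformly in $n$, that the newly inserted point always bisects a longest gap, and into the next two powers of $\tau$. One can either quote a sufficiently sharp form of the three-distance theorem and then only do index bookkeeping, or, to stay purely combinatorial, prove directly that $w_n^{\ast}$ has the maximal frequency in $\mu_n$ and that its two right extensions have frequencies $\tau^{m+1}$ and $\tau^{m+2}$, using that the right-special factors form a chain under prepending a single letter (they are the reversals of the prefixes of the Fibonacci word), with branch points --- the bispecial factors --- occurring at lengths $f_k-2$. Neither route hides a conceptual surprise; the only place demanding care is the Fibonacci breakpoints $n=f_k-1$, where the smallest frequency class $\tau^{k-2}$ has just been exhausted and a new class $\tau^{k+1}$ is created.
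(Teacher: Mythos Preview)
The paper does not actually give a proof of this proposition --- it only writes ``full proof omitted'' and says the argument is inductive, based on an observation in \cite{lothaire2002algebraic}. Your proposal is correct and is precisely such an argument: the induction on $n$, driven by the three-distance theorem for the rotation by $\tau$ (standard Sturmian theory, covered in Lothaire), cleanly identifies at each step which frequency splits and into what, and your case split at $n=f_k-1$ handles the only delicate point.
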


Therefore,
\begin{equation}
\begin{aligned}
    S(n)=&\Big[(n-f_{k-1}+1)k\tau^{k}+(n-f_{k-2}+1)(k-1)\tau^{k-1}\\
    &+(f_{k}-n-1)(k-2)\tau^{k-2}\Big]\log(\frac{1}{\tau}).
\end{aligned}
\end{equation}
Hence indeed
\begin{equation}
    S(n)=\log(n)+\Theta(1).
\end{equation}

\end{appendix}

\end{document}